\newcommand{\cE}{\mathcal{E}}
\newcommand{\cO}{\mathcal{O}}
\newcommand{\beq}{\begin{eqnarray}}
\newcommand{\eeq}{\end{eqnarray}}
\newcommand{\beqn}{\begin{equation}}
\newcommand{\eeqn}{\end{equation}}
\newcommand{\R}{\mathbb{R}}
\newcommand{\beps}{\varepsilon}
\newcommand{\bone}{\mathbf{1}}
\newcommand{\mc}{\mathcal}
\newcommand{\mb}{\mathbf}
\renewcommand{\hat}{\widehat}
\newcommand{\cM}{\mc{M}}
\newcommand{\cI}{\mc{I}}
\newcommand{\cP}{\mc{P}}
\renewcommand{\P}{\mathbb P}
\newcommand{\bm}{\mb{m}}
\newcommand{\bb}{\mb{b}}
\newcommand{\bx}{\mb{x}}
\newcommand{\by}{\mb{y}}
\newcommand{\bz}{\mb{z}}
\newcommand{\E}{\mathbb{E}}
\newcommand{\hash}{h}
\newcommand{\ith}{$i^\text{th}$}
\newcommand{\jth}{$j^\text{th}$}
\newcommand{\lth}{$\ell^\text{th}$}
\newcommand{\snr}{{\sf SNR}}
\newcommand{\CAP}{C_{\sf awgn}}
\newcommand{\MapN}{c}
\newtheorem{theorem}{Theorem}
\newtheorem{lemma}{Lemma}
\newtheorem{proposition}[lemma]{Proposition}
\newtheorem{claim}[lemma]{Claim}
\newtheorem{definition}{Definition}
\newtheorem{property}[lemma]{Property}
\title{De-randomizing Shannon: The Design and Analysis of a Capacity-Achieving
Rateless Code}
\author{
 {\sf Hari Balakrishnan, Peter Iannucci, Jonathan Perry, Devavrat Shah}\\
{\em Department of EECS}\thanks{HB, PI, and JP are affiliated with CSAIL; DS
is affiliated with LIDS.}\\ {\em Massachusetts Institute of
Technology}\\{\em Cambridge, MA, USA.}
}
\date{}
\begin{document}

\maketitle

\thispagestyle{empty}
\begin{abstract}

  This paper presents an analysis of spinal codes, a class of rateless
  codes proposed recently~\cite{hotnets-spinal}.  We
  prove that spinal codes achieve Shannon capacity for the binary
  symmetric channel (BSC) and the additive white Gaussian noise (AWGN)
  channel with an efficient polynomial-time encoder and decoder.  They
  are the first rateless codes with proofs of these properties for
  BSC and AWGN.
  
  The key idea in the spinal code is the sequential application of a
  hash function over the message bits. The
  sequential structure of the code turns out to be crucial for
  efficient decoding.  Moreover, counter to the wisdom of having an
  expander structure in good codes \cite{sipser1996expander}, we show
  that the spinal code, despite its sequential structure,
  achieves capacity.  The pseudo-randomness provided by a hash function suffices for this
  purpose.
  
  Our proof introduces a variant of Gallager's result characterizing
  the error exponent of random codes for any memoryless channel
  \cite[Chapters 5, 7]{Gallager68}.  We present a novel application of
  these error-exponent results within the framework of an efficient
  sequential code.
  The application of a hash function over the message bits provides a
  methodical and effective way to de-randomize Shannon's random
  codebook construction~\cite{shannon1949communication}.  


\end{abstract}

\newpage
\setcounter{page}{1}
\section{Introduction}

In a {\em rateless code}, the codewords (i.e., coded bits or symbols)
corresponding to higher-rate encodings are prefixes of lower-rate
encodings.  Rateless codes have been known since Shannon's random
codebook construction~\cite{shannon1949communication}, which proved
the existence of capacity-achieving codes.  Unfortunately, the random
codebook is computationally intractable to decode, taking
time exponential in the message size. It took several decades of
research on coding theory and algorithms before practical rateless
codes were discovered for the binary erasure channel (BEC) by Luby (LT
codes~\cite{luby2003lt}) and Shokrollahi (Raptor
codes~\cite{shokrollahi2006raptor}). The BEC is a good model for
packet losses on the Internet.

For wireless channels, however, packet erasure models give way to more
appropriate random bit-flip models (at the link layer) and additive
noise models (at the physical layer).  Moreover, wireless channel
conditions vary with time due to mobility and interference, even over
durations as short as a single packet transmission.  In this setting,
{\em fixed-rate} (or fixed-length) codes that work well at a fixed
(and known) bit-flip probability or signal-to-noise ratio (SNR) are by
themselves insufficient to achieve high throughput; they require
additional (and complex) heuristics to determine what the channel
conditions are, and to pick the right
code~\cite{bicket-thesis,camp-mobicom08,charm-mobisys08,vutukuru2009cross-layer},
resulting in a system without any theoretically appealing properties.
This task becomes difficult with rapid channel variations, numerous
transmission rate alternatives, and multiple transmitters contending
for the same wireless channel.

In contrast to fixed-rate codes, a good rateless code will adapt
automatically to changing conditions because it will {\em inherently}
transmit just the right amount, whatever the conditions.  Because they
are a natural fit for time-varying wireless networks, the design of
good rateless codes for the {\em binary symmetric channel} (BSC) and
the {\em additive white Gaussian noise} (AWGN) channel has received
renewed interest recently~\cite{Erez12,GK11,hotnets-spinal}.  By
``good'', we mean a code that achieves a rate close to channel
capacity: $1 - H(p)$ for the BSC, where $p$ is the bit-flip
probability and $H(p) = -p \log p - (1-p) \log (1-p)$, and
$\frac{1}{2} \cdot \log (1 + \snr)$ for the AWGN channel, where $\snr$
is the ratio of the signal power to the noise variance.\footnote{In
  this paper $\log$ means ``logarithm to base 2'' and $\ln$ stands for
  the natural logarithm.}



In this paper, we prove that a family of rateless codes, called {\em
  spinal codes}, achieves capacity over both the BSC and the AWGN
channel. Spinal codes are the {\em first} provably capacity-achieving
rateless codes with a polynomial-time encoder and decoder over both
these standard channel models.  Our work provides for the BSC and AWGN
channel what LT~\cite{luby2003lt} and
Raptor~\cite{shokrollahi2006raptor} codes provide for the BEC, but
with a rather different approach.

Spinal codes use hash functions satisfying the {\em pair-wise 
indepence}~\cite{Mitz} to produce a sufficiently random codebook.  The encoder for a spinal code applies the hash
function sequentially over groups of message bits in a structure that
resembles a classic convolutional code.  The maximum-likelihood (ML)
decoder for a spinal code constructs a tree of possibilities by
replaying the encoder over various possible input message bits, and
computes either the Hamming distance (BSC) or squared Euclidean
distance (AWGN) between the received data and the various choices in
the tree.  A complete tree is, of course, exponential in the message
size, but our key result is that one can aggressively prune the
decoding tree to obtain an efficient decoder with polynomial
computational cost, that still essentially achieves capacity.

Our approach highlights how the SAC property of the hash function
provides a way to de-randomize
Shannon's
random codebook~\cite{shannon1949communication} approach to produce a
practical, capacity-achieving rateless code.  As such, our proof
methods are likely to extend to de-randomize, and possibly render
practical, various random coding constructions in Information Theory
that have hitherto been widely used to characterize {\em existential}
capacity results (cf. El Gamal and Kim~\cite{ElGamal-Kim-12}).



\paragraph{Prior work.} 
Raptor codes, though designed primarily for erasure channels (on which
they provably achieve capacity), can be extended to AWGN and BSC
channels with a belief propagation decoder~\cite{palanki2005rateless}
similar to graphical codes like
LDPC~\cite{gallager2002ldpc,vila_casado_informed_2007}.  However, not
much is known theoretically about how good this code is over these
channels. In fact, the capacity of LDPC codes (with an efficient
decoder) over both the BSC and AWGN channels, in general, is still
unresolved, which is further evidence that the BSC and AWGN channels
are non-trivial settings for the design and analysis of good codes.

Recently, an interesting ``layered'' approach has been developed by
Erez, Trott, and Wornell \cite{Erez12} (\cite{GK11} describes an
implementation of this concept) primarily for the AWGN channel, but
there is no obvious way to extend it to the BSC. In this approach, a
layered rateless code is built upon a capacity-achieving fixed-rate
``base'' code at the lowest layer.  Erez et al. prove that their code
achieves capacity over AWGN assuming that the base code achieves
capacity at some SNR and the number of layers increases without bound.
Our work is an improvement over this layered approach in two ways:
first, we resolve an open question they raise about designing an
efficient capacity-achieving rateless code for the BSC, and second, it
is a more direct and natural construction that does not rely on
layering atop a (presumed capacity-achieving) fixed-rate base code.

Structurally, spinal codes are similar to convolutional
codes~\cite{Elias55,viterbi1967error}, which apply a linear function
sequentially over the message bits, but such codes with so-called
small state (constraint length) are far from capacity (in large part
because of their sequential nature).  In contrast, to achieve capacity
using linear codes (whether fixed-rate or rateless) over the BEC,
prior work suggests that some form of random graph ensemble or
expander structure is
necessary~\cite{gallager2002ldpc,sipser1996expander}.  Somewhat
surprisingly, despite their sequential nature, we are able to
establish that spinal codes---by using a hash function with the pairwise
independence---achieve capacity.

\paragraph{Our results.} For the BSC, we show that rateless spinal
codes can be encoded in $O(\frac{n \log n}{\beps^2})$ time and decoded
in $n^{O(1/\beps^3)}$ time, where $n$ is the number of message bits
and $\beps$ is the {\em gap to capacity} at which the code is
operating (i.e., the achieved rate is within $\beps$ of capacity).
This result holds for $n = \Omega(1/\beps^5)$. For the AWGN channel,
we establish a similar result with somewhat lower computational cost:
$O(\frac{n \log n}{\beps})$ time for encoding and $n^{O(1/\beps^2)}$
time for decoding.

Thus, by selecting $n = {\sf poly}(1/\beps)$, it is possible to
operate within $\beps$ of capacity with an encoding cost ${\sf
  poly}(1/\beps)$ and decoding cost $\exp\big({\sf
  poly}(1/\beps)\big)$ for both channel models.  These costs are
comparable to the computational efficiency achieved by the Forney's
concatenation construction~\cite{Forney66}, as described in
Guruswamy's survey of iterative decoding methods~\cite{Gsurvey06}
($n/\beps^{O(1)}$ for encoding and $n2^{1/\beps^{O(1)}}$ for
decoding). However, the key advantage of spinal codes
is that they are rateless, unlike all known good and efficient codes
for the BSC; and they are arguably more elegant than the concatenation
construction.
We have implemented spinal codes in both software and hardware (FPGA)
to demonstrate their practicality and high throughput, allowing us to
project that a silicon implementation of the design will run at 50
Mbits/s (commercial 802.11b/g speeds)~\cite{sigcomm-spinal}.  The
experimental results should alleviate concerns about the
super-linearity of the encoder and decoder being a barrier to their
practical usefulness.





\paragraph{Method and proof technique.} 
The key idea is to use the error exponents of random codes as a
building block.  We apply Gallager's result characterizing the random
coding error exponent for any memoryless channel \cite[Chapters 5,
7]{Gallager68}. That result, though established for random codes where
the codewords for distinct messages are mutually independent, applies
even if only pairwise independence between the coded bits holds. The
application of this idea to analyze spinal codes is somewhat
remarkable because many coded bits of two distinct messages are likely
to be highly dependent.  The rest of the proof uses probabilistic
analysis leveraging the SAC property of the hash function as a
de-randomization strategy, establishing that a sequentially structured
code can achieve capacity.

\section{Overview of Spinal Codes}
\label{sec:one}

This section describes the encoder (\S\ref{ss:bscenc}) and decoder
(\S\ref{ss:bscdec}) for spinal codes, which are variants of the
methods introduced in~\cite{hotnets-spinal}. Our discussion here is in
the context of the BSC, but the same approach with one addition
(direct coding to symbols) works for the AWGN channel, as described in
\S\ref{s:awgn}.


\subsection{Encoder}
\label{ss:bscenc}

The encoder maps $n$ input message bits, $\bm = (m_1,\dots, m_n)$ to a
stream of coded bits, $x_1(\bm), x_2(\bm), \dots$. These coded bits
are transmitted in sequence until the receiver signals that it is done
decoding. 


\paragraph{Hash function.} The core of the code is a hash function,
$\hash$, which takes two inputs, a $\nu$-bit state and $k$ message
bits, and maps them to a new $\nu$-bit state. That is, $\hash :
\{0,1\}^\nu \times \{0,1\}^k \to \{0,1\}^\nu.$ We choose $\hash$
uniformly at random, based on a random seed, from ${\cal H}$, a family
of hash functions with {\em pair-wise} independence property cf. \cite{Mitz}: 
each $x \in \{0,1\}^\nu$ is mapped uniformly at random (randomness induced
by selection of random seed) to any of the $\{0,1\}^k$; for any $x \neq x' \in \{0,1\}^\nu$, 
\begin{align}
\P(h(x) = y, h(x') = y') & ~=~\P(h(x) = y)\, \P(h(x') = y') ~=~ 2^{-2k},
\end{align}
for any $y, y' \in \{0,1\}^k$. 

\paragraph{Spine.} $\hash$ is applied sequentially to $k$
non-overlapping message bits at a time, producing a sequence of
$\nu$-bit states called the {\em spine}.  The initial state $s_0 =
0^\nu$. Let $\bar{m}_i = (m_{ki+1}\ldots m_{k(i+1)})$ be the \ith{}
$k$-bit block of the message $\bm$. Then, as shown in
Figure~\ref{fig:encoder}, each successive $\nu$-bit value in the spine
is generated as
$$s_i = \hash(s_{i-1}, \bar{m}_{i-1}), \quad 1\leq i\leq n/k.$$

\begin{figure}[t]
\centering
 \includegraphics[width=0.9\textwidth]{./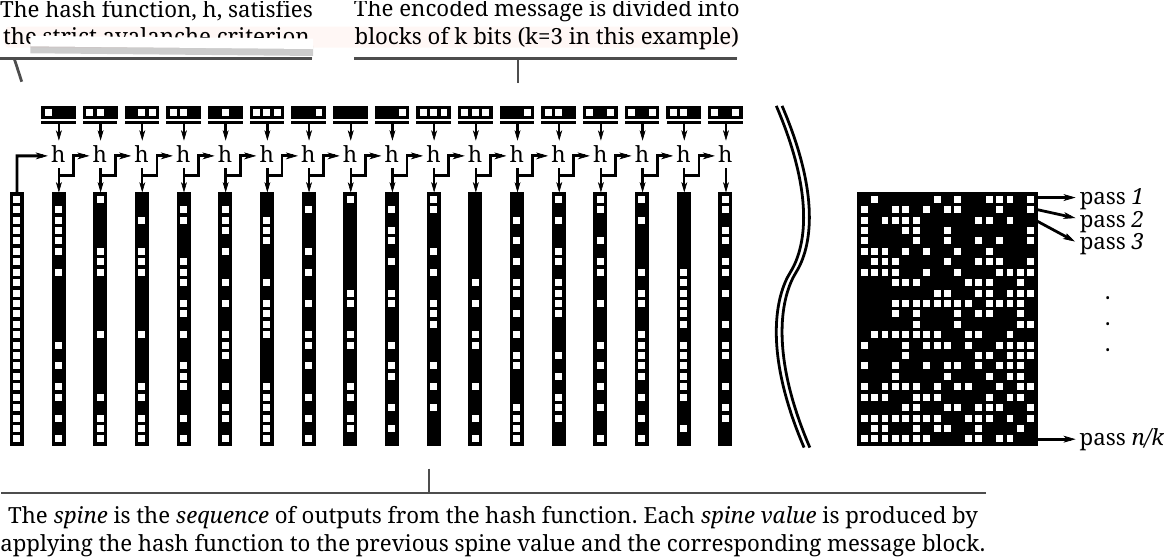}
 \caption{Encoder for the BSC. Each dark square is a ``1'', each white
   square is a ``0''.}
\label{fig:encoder}
\end{figure}

\paragraph{Generating coded bits.} The encoder uses the spine values
$s_1,\dots, s_{n/k}$ to produce coded bits in passes. In the first
pass, it extracts the most significant bit from each $\nu$-bit spine
value to produce $n/k$ coded bits $x_1,\dots, x_{n/k}$. In general, in
the \lth{} pass, the encoder extracts the \lth{} most significant bit
of each spine value $s_1,\dots, s_{n/k}$, producing coded bits
$x_{(\ell-1) \frac{n}{k} +1},\dots, x_{\ell \frac{n}{k}}$.
The coding parameters $k, \nu$ determine various properties of the
code.  The maximum rate achieved by the code at the end of the \lth{}
pass is $R_\ell = k/\ell$; the lowest achievable rate is $k/\nu$.

\paragraph{Sequential structure of the code.}
The combination of the encoder's iterative structure and the SAC
property of the hash function gives the code a unique balance. On the
one hand, two messages that differ by one or more bits will have very
different codewords, allowing analysis using random coding techniques.
On the other hand, this divergence in the output is structured in such
a way as to to allow an efficient decoder.

In a spinal code, the output bits $x_i, x_{i+\frac{n}{k}},
x_{i+2\frac{n}{k}}, \ldots$ are fully determined by the first $i \cdot
k$ bits of the message $\bm$.  Two messages that first differ in the
$i^\text{th}$ block of $k$ bits have the same first $i-1$ spine
values, and have statistically independent subsequent spine values
(i.e., the later values are ``very different'').

\if 0
We now formalize the latter property.

\begin{definition}\label{def:prefix}
An output bit $x_i(m)$ is {\em $j$ prefix-resolved} if its value is
fully determined by a length $j$ prefix of $m$, ie. $x_i$ is a function 
of $j$ bits: $x_i(m_1 \dots m_j)$.
\end{definition}

\begin{proposition}\label{prop:spinal_prefix}
In spinal codes, bits $x_i, x_{i+\frac{n}{k}}, x_{i+2\frac{n}{k}},
\cdots$ 
are $i \cdot k$ prefix resolved.
\end{proposition}

\begin{proof}
 These output bits are all derived from $s_i$. It then suffices to
show that
 $s_i$ is determined by an $i \cdot k$ prefix of the message. This is 
 a direct consequence of the sequential structure of the encoder:
$s_0$ is a known constant and is thus determined by a 0-prefix of the
message. 
Iteratively applying the recursion $s_i = \hash(s_{i-1},
\bar{m}_{i-1})$ shows $s_{i}$ is determined on $s_0$ and $\bar{m}_0
\cdots \bar{m}_{i-1}$, ie bits $m_1 \cdots m_{ik}$.
\end{proof}

\fi

\subsection{Decoder}
\label{ss:bscdec}



\paragraph{Decoding over a tree.} Maximum likelihood (ML) decoding
over the BSC boils down to a search for the encoded message whose
Hamming distance is nearest to the received message.  Because the
spinal encoder applies the hash function sequentially, input messages
with a common prefix will also have a common spine prefix.  The key to
exploiting this structure is to decompose the total distance into a
sum over spine values.  If we break the received bits $\by$ into
sub-vectors $\by_1,\ldots,\by_{n/k}$ containing symbols from spine
values $s_1,\dots, s_{n/k}$ of the correct message, and similarly if
we break $\bx(\bm')$ for the candidate message $\bm'$ into $n/k$
vectors of bits $\bx_1(s'_1), \dots, \bx_{n/k}(s'_{n/k})$ that depend
on spine values $s'_1,\dots, s'_{n/k}$ (corresponding to message
$\bm'$), then the cost function decomposes as
\begin{align}
d_H(\by, \bx(\bm')) & = \sum_{i=1}^{n/k} d_H(\by_i, \bx_i(s'_i)). 
\label{eq:decompose}
\end{align}
A summand $d_H(\by_i, \bx_i(s_i))$ only needs to be computed once for
all messages that share the same spine value $s_i$. The following
algorithm takes advantage of this property.

Ignoring hash function collisions (as established in the proof of
Theorem \ref{thm:bsc}, this happens with very low probability),
decoding can be recast as a search over a tree of message prefixes.
The root of this {\em decoding tree} is $s_0$, and corresponds to the
zero-length message.  Each node at depth $d$ corresponds to a prefix
of length $kd$ bits, and is labeled with the final spine value $s_d$
of that prefix.  Every node has $2^k$ children, connected by edges
$e=(s_d,s_{d+1})$ representing a choice of $k$ message bits $\bar
m_e$.  As in the encoder, $s_{d+1}$ is $\hash(s_d, \bar m_e)$.  By
walking back up the tree to the root and reading $k$ bits from each
edge, we can find the message prefix for a given node.

To the edge incident on node $s_d$, we assign a {\em branch cost}
$d_H(\by_d, \bx_d(s_d))$.  Summing the branch costs on the path from
the root to a node gives the {\em path cost} of that node, equivalent
to the sum in Eq.\eqref{eq:decompose}.  The ML decoder finds the leaf
with the lowest cost, and returns the corresponding complete message.
The sender continues to send successive passes until the receiver
signals that the message has been decoded correctly.  The receiver
stores all the symbols it receives until the message is decoded
correctly.  


\paragraph{Pruning the tree.}  Decoding along the tree has exponential
complexity.  A natural greedy approximation is to prune the tree by
maintaining a small number of candidates with the lowest path costs at
each depth, while exploring the tree from root to leaves. Iteratively,
at each depth, expand the retained (up to) $B$ candidates into $B 2^k$
possible candidates at the next depth of the tree. Compute the path
cost of all of these $B 2^k$ candidates and retain the $B$ out of them
with the lowest possible path cost (break ties arbitrarily).  We use
the term {\em beam width} to refer to the parameter $B$, as this tree
exploration and pruning method is called beam
search~\cite{steinbiss1994improvements} in AI, and known as the
$M$-algorithm~\cite{anderson1984sequential} in the coding literature,
where it has been proposed for decoding convolutional codes.  We show
the somewhat surprising and noteworthy result that this simple greedy
method essentially achieves channel capacity when used for spinal
decoding.

\paragraph{Encoding and decoding complexity.}  The encoder produces
$n/k$ spine values each with $\nu$ bits.  Since the cost of producing
$\nu$ hash bits from a $\nu$-bit and $k$-bit input is $O(\nu+k)$, the
encoding cost (due to hash function calculations) scales as $O((\nu +
k)n/k) = O(n(1+ \nu/k))$.  The decoder uses the pruned tree search
over $n/k$ depth tree with each depth requiring sorting $B\cdot 2^k$
numbers as well as $B\cdot 2^k$ hash operations. Therefore, the total
decoding cost scales as $O(n B 2^k (k + \log B + \nu))$.
%
%
%
%

\medskip
\section{Performance of Spinal Codes over the BSC}
\label{ss:bscproof}

The principal result of this section is a proof of
Theorem~\ref{thm:bsc} (stated below), showing the polynomial-time
encoder and greedy tree-pruning decoder for spinal codes achieve
Shannon capacity over the BSC.

\paragraph{Model: Memoryless Channel in Discrete Time.} A noisy
channel is described by an input alphabet $\cI$, an output alphabet
$\cO$, and a collection of probability measures $\cP = (P_{i}, i \in
\cI),$ defined over $\cO$: when input $i \in \cI$ is transmitted over
the channel, the received output is distributed over $\cO$ according
to $P_i$. The communication channel is memoryless: the
output of the channel at any time depends only on the input at that
time, independent of past transmissions.  That is, when $x_1,\dots,
x_T$ are transmitted on the channel, the probability (density)
that the output is $y_1,\dots, y_T$
is $\prod_{t=1}^T P_{x_t}(y_t)$.


The BSC is memoryless. In a BSC with bit-flip probability $p \in
(0,1/2)$, $\cI = \cO = \{0,1\}$, $P_0(0) = P_{1}(1) = 1-p$, and
$P_0(1) = P_1(0) = p$. 


\begin{theorem}\label{thm:bsc}
  Consider an $n$-bit message encoded with a spinal code with $k \geq
  1$ and $\nu = \Theta(k^2 \log n)$ operating over a BSC with
  parameter $p \in (0,1/2)$.  Then, the greedy decoder with $B =
  n^{O(k^3)}$ decodes all but the last $O(k^3 \log n)$ message bits
  successfully with probability at least $1-1/n^2$,
  achieving a rate 
\begin{align}
R \geq C - O\Big(\frac{C^2}{k}\Big),  
~~\text{where } C = 1 - H(p).
\end{align}
\end{theorem}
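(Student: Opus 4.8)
The plan is to reduce the analysis of the spinal code to the random-coding error exponent, then control the greedy (beam-search) decoder's deviation from true ML decoding. First I would set up the event structure: let $\bm$ be the true message and consider any alternative message $\bm'$ that first diverges from $\bm$ in block $j$. By the sequential structure of the encoder and the pairwise independence of $h$, the spine values $s_j, s_{j+1}, \dots$ of $\bm'$ are (conditionally on the shared prefix) uniform and, crucially, pairwise independent of those of $\bm$; hence the coded bits $\bx_j(s'_j), \bx_{j+1}(s'_{j+1}), \dots$ form a ``fresh'' random codebook entry over the suffix of the block positions. The first step is therefore to state the variant of Gallager's random-coding bound (from \cite[Chapters 5,7]{Gallager68}) that only requires pairwise independence of the coded symbols, and to specialize it to the BSC: for a suffix of $t = n/k - j + 1$ spine values, each contributing $\ell$ coded bits after $\ell$ passes, the probability that a wrong suffix has smaller accumulated Hamming distance than the true suffix decays like $2^{-t\ell\, E_r(R_\ell)}$, where $E_r$ is the random-coding exponent and $R_\ell = k/\ell$. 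When $R_\ell \approx C$, one expands $E_r$ near capacity and extracts the $O(C^2/k)$ gap term; this is where the $\nu = \Theta(k^2\log n)$ scaling and the $O(k^3\log n)$ ``lost tail'' come from.

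Next I would handle the union bound over wrong messages. Fixing the divergence block $j$, there are at most $2^{k(n/k - j + 1)}$ competing suffixes, and each contributes the exponential factor above; choosing the operating pass $\ell^\star$ (equivalently the rate) so that $\ell^\star E_r(R_{\ell^\star}) > k + (\text{slack})$ makes this sum over suffixes, and then over $j$, summable to something like $1/n^2$. This establishes that with probability $\geq 1 - 1/n^2$ the true message (minus its last $O(k^3\log n)$ bits, which are protected by too few passes to be resolved at this rate) is the unique ML-optimal leaf, and moreover that it beats every competitor by a comfortable \emph{margin} at every depth — not just at the leaf. The margin statement is what I would formulate carefully, because it is exactly what the greedy decoder needs.

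The third and hardest step is to show the beam search with $B = n^{O(k^3)}$ never discards the true prefix. The obstacle is that greedy pruning at depth $d$ keeps only the $B$ lowest path costs there, and a priori the true prefix's path cost could be momentarily large (the channel could, by bad luck, corrupt an early stretch badly) even though the \emph{total} cost is small. I would argue: (i) by the decomposition in Eq.~\eqref{eq:decompose}, a prefix that survives to depth $d$ with low path cost but is ``wrong'' (diverged at some $j \le d$) must already have paid an exponent penalty on blocks $j,\dots,d$, so the number of wrong prefixes whose path cost is within the true prefix's path cost plus the per-depth margin is, with high probability, at most $n^{O(k^3)}$ — a counting bound via the same pairwise-independence error-exponent estimate applied at every intermediate depth and union-bounded over $j \le d$; (ii) therefore, setting $B$ above this count guarantees the true prefix is retained at every depth, since it can be out-ranked only by prefixes that are genuinely better on the received bits seen so far, and there are fewer than $B$ of those. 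Then decoding succeeds because the retained leaf set contains the true (truncated) message, which is ML-optimal by step two. I expect step (iii) — making the ``at most $n^{O(k^3)}$ competitive wrong prefixes'' bound hold \emph{simultaneously} at all depths with a single $1/n^2$ failure budget, while keeping the dependence on $k$ polynomial in the exponent — to be the main technical grind, requiring a careful choice of the margin per depth (shrinking geometrically, or a fixed $\Theta(k^3\log n)$ slack) and a Chernoff bound on the true-prefix partial Hamming distances to rule out the ``bad early channel'' scenario.
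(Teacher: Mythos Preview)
Your overall plan --- pairwise-independent Gallager bound, union over divergence points, counting competitive prefixes for the beam --- matches the paper's. But step three is over-complicated, and it rests on a claim in step two that is too strong.

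The assertion that the true prefix ``beats every competitor by a comfortable margin at every depth'' is false and unprovable: a prefix $\bm'$ that diverged from $\bm$ only one or two blocks ago has accumulated only $O(L)$ fresh coded bits, and over so few bits it may well have lower path cost than $\bm$. What the paper proves instead is a sliding-window invariant: at depth $d$, every surviving prefix with path cost at most that of $\bm$ diverged within the last $i^* = \Theta(k^2 C^{-4}\kappa_p^{-1}\log n)$ blocks. To maintain it going to depth $d+1$, apply the Gallager bound to the window of $i^*L$ coded bits seen since a prefix diverged exactly $i^*$ blocks ago; with $2^{i^*k-k}$ such competitors at rate $R = C - \Theta(C^2/k)$, the error probability is $2^{-i^*L\,\kappa_p(C-R)^2} \le 1/n^6$, so all of them lose to $\bm$. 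The recently-diverged prefixes that may still beat $\bm$ number at most $2^{i^*k}$, hence $B = 2^{i^*k} = n^{O(k^3)}$ retains $\bm$, and a union bound over the $n/k$ depths finishes. This is exactly your conclusion (i), but its proof is the windowed Gallager bound applied inductively --- no per-depth margin is needed, and no separate Chernoff bound on the true prefix's partial Hamming distances either, since the Gallager bound is already a \emph{comparison} between $\bm$ and $\bm'$ integrated over all channel noise. The ``bad early channel'' scenario you worry about is absorbed into that bound.

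You also omit hash collisions. With finite $\nu$, distinct inputs to $h$ can yield equal spine values and break pairwise independence of the coded bits. The paper handles this by exhibiting an event $\cE$ (no collision along the relevant $i^*$-block window, over all competing $\bm'$) with $\P(\cE^c) = O(1/n^6)$ provided $\nu = \Theta(k^2\log n)$, and conditioned on $\cE$ the pairwise-independence property holds exactly. This collision analysis --- not the error-exponent expansion near capacity, as you write --- is what fixes the $\nu$ scaling.
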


The randomness in Theorem~\ref{thm:bsc} is induced by the channel
conditions and the code construction.  For $n = \omega(k^3 \log n)$,
the theorem says that essentially all bits are decoded (to decode {\em
  all} the bits, we can append $O(k^3 \log n)$ ``tail'' bits to the
end of each input message).  For $n \geq k^5$, the loss of rate due to
these tail bits is $O((k^4 \log n)/n) = o(1/k)$.  Therefore, the code
achieves a rate within $O(1/k)$ of the capacity of the BSC, making it
a good rateless code.  The encoder complexity scales as $O(n k \log
n)$; the decoder complexity scales as $n^{O(k^3)}$.

\paragraph{Proof plan.} The rest of this section establishes this
result with the following plan. We start by recalling Gallager's
result on the probability of error for a random code, which requires
the codewords associated with distinct messages to be completely
independent.  We present a useful variant of this result, which
requires only {\em pairwise} independence (a property, we show to be
satisfied by different enough messages under application of the hash function, see Proposition \ref{def:output_avalanche}).  We then discuss a corollary of the result
for a code operating at a rate close to the capacity, and establish
that spinal codes can operate at a rate near the capacity (so that the
corollary will apply).  Finally, we use these propositions to prove
Theorem \ref{thm:bsc} in two stages: first, assuming no hash function
collisions, and then showing that the collision probability is small.

%

\subsection{Error probability of random codes}\label{ssec:1.5}


The random code for a message of $n$ bits is constructed using a
distribution $Q$ over the input symbols. For the BSC, the input
symbols are $\{0,1\}$ and a capacity-achieving random code utilizes
$Q$ such that $Q(0) = Q(1) = 1/2$. The code maps an $n$-bit message,
$\bm \in \{0,1\}^n$, to a $T$-symbol codeword $\bx(\bm) = (x_1(\bm),
\dots, x_T(\bm))$ by drawing each of the $x_t(\bm), ~1\leq t\leq T$
independently at random according to $Q$. In the random code,
introduced by Shannon and considered by Gallager, all $\bx(\bm)$ are
independent across $\bm \in \{0,1\}^n$.  We  consider a random
code with {\em pairwise} independence across messages.  
\begin{property}[{\em Pairwise independent random code for
    the BSC}]\label{prop:bsc}
  A code that maps every $n$-bit message $\bm \in \{0,1\}^n$ to a
  random codeword of $T$ bits, $\bx(\bm)$, so that (i) for a given
  $\bm$, $x_1(\bm),\dots, x_T(\bm)$ are i.i.d. and uniformly
  distributed over $\{0,1\}$, (ii) for any $\bm \neq \bm'$,
  $\bx(\bm)$ and $\bx(\bm')$ are independent of each other, and (iii)
  the joint distribution of all codewords is symmetric.
\end{property}

For pairwise independent random codes, the following variant of
Gallager's error-exponent result \cite{Gsimple} \cite[Theorem 5.6.1,
Example 1]{Gallager68} holds (proof in
Appendix~\ref{app:variant}): 
\begin{lemma}\label{lem:bsc}
  Consider a BSC with parameter $p \in (0,1/2)$ and capacity $C = 1 -
  H(p)$.  Given a pairwise independent random code for the BSC of
  message length $n$, code length $T$, and rate $R = n/T < C$, let the
  decoder operate using the Maximum Likelihood (ML) rule to produce an
  estimate $\hat{\bm}$ when message $\bm \in \{0,1\}^n$ is
  transmitted.  Then the probability of decoding error, $P_e = 2^{-n}
  \Big(\sum_{\bm \in \{0,1\}^n} \P(\bm \neq \hat{\bm})\Big),$ for $R =
  1 - H(q)$ with $p < q$ satisfies:
\begin{itemize}
\item[(a)] $P_e  \leq 2^{-T D(q \| p)}, $ where $D(q \| p) = q \log
  \frac{q}{p} + (1-q) \log \frac{1-q}{1-p}$, if $q \leq
  \sqrt{p}/(\sqrt{p} + \sqrt{1-p})$, and
\item[(b)] $P_e \leq 2^{-T (1-R - 2 \log (\sqrt{p} + \sqrt{1-p}))}$,
  otherwise. 
\end{itemize}
\end{lemma}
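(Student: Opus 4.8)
The plan is to run Gallager's random‑coding‑exponent bound, but organized so that only pairwise independence of distinct codewords is used, and then to evaluate the resulting exponent explicitly for the BSC with the uniform input law $Q(0)=Q(1)=1/2$.

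First I would fix a transmitted message $\bm$ and condition on the received word $\by$ and on the realized codeword $\bx=\bx(\bm)$. An ML error occurs only if $\P(\by\mid\bx(\bm'))\geq\P(\by\mid\bx)$ for some $\bm'\neq\bm$. Applying the elementary bound $\P\big(\bigcup_j A_j\big)\leq\big(\sum_j\P(A_j)\big)^{\rho}$ for $\rho\in[0,1]$, and then the Chernoff step $\P\big(\P(\by\mid\bx(\bm'))\geq\P(\by\mid\bx)\big)\leq\E_{\bx(\bm')}\big[(\P(\by\mid\bx(\bm'))/\P(\by\mid\bx))^{1/(1+\rho)}\big]$, one obtains
\[
\P(\mathrm{err}\mid\by,\bx)\;\leq\;\Big(\sum_{\bm'\neq\bm}\E_{\bx(\bm')}\big[(\P(\by\mid\bx(\bm'))/\P(\by\mid\bx))^{1/(1+\rho)}\big]\Big)^{\rho}.
\]
Each summand involves only the pair $(\bx,\bx(\bm'))$, so part~(ii) of Property~\ref{prop:bsc} lets me replace the conditional law of $\bx(\bm')$ given $\bx$ by its marginal i.i.d.-$Q$ law from part~(i); full mutual independence of the codewords, which the sequential spinal encoder violates, is never invoked. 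Taking $\E_{\by,\bx}$ and using that within a codeword the symbols are i.i.d. and the channel memoryless, the bound factorizes over the $T$ coordinates and gives $\P(\bm\neq\hat\bm)\leq(2^n-1)^{\rho}\,2^{-TE_0(\rho,Q)}\leq 2^{-T(E_0(\rho,Q)-\rho R)}$ for every $\rho\in[0,1]$, where $E_0(\rho,Q)=-\log\sum_y\big(\sum_xQ(x)P_x(y)^{1/(1+\rho)}\big)^{1+\rho}$ and $R=n/T$; part~(iii) makes this uniform over $\bm$, so it also bounds the average $P_e$ in the statement. Optimizing over $\rho$, $P_e\leq 2^{-TE_r(R)}$ with $E_r(R)=\max_{0\le\rho\le1}[E_0(\rho,Q)-\rho R]$.

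It then remains to evaluate $E_r(R)$ for the BSC with $Q$ uniform. A direct computation gives $E_0(\rho)=\rho-(1+\rho)\log\big(p^{1/(1+\rho)}+(1-p)^{1/(1+\rho)}\big)$, which is concave in $\rho$, so the maximum of $g(\rho)=E_0(\rho)-\rho R$ is attained either at the stationary point $\rho^{\ast}$ with $E_0'(\rho^{\ast})=R$, when that lies in $(0,1)$, or at $\rho=1$. Introducing the tilted parameter $q(\rho)=p^{1/(1+\rho)}/\big(p^{1/(1+\rho)}+(1-p)^{1/(1+\rho)}\big)$, the standard Gallager identities $E_0'(\rho)=1-H(q(\rho))$ and $E_0(\rho)-\rho E_0'(\rho)=D(q(\rho)\|p)$ hold, and $q(\rho)$ increases from $p$ at $\rho=0$ to $\sqrt p/(\sqrt p+\sqrt{1-p})$ at $\rho=1$. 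Hence, with $q$ defined by $R=1-H(q)$: if $p<q\leq\sqrt p/(\sqrt p+\sqrt{1-p})$ the stationary point is interior and $E_r(R)=D(q\|p)$, which is case~(a); if $q$ is larger the optimum sits at $\rho=1$, where $E_0(1)-R=1-R-2\log(\sqrt p+\sqrt{1-p})$, which is case~(b). I expect essentially all of the work to be in this last paragraph — verifying the two tilted‑distribution identities and checking that the interior/boundary dichotomy for $\rho^{\ast}$ is exactly the threshold $q\leq\sqrt p/(\sqrt p+\sqrt{1-p})$ (this is Gallager's Example~1 following Theorem~5.6.1) — while the genuinely new ingredient, the weakening from mutual to pairwise independence, is the one‑line observation in the union‑bound step above.
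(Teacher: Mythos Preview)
Your proposal is correct and follows essentially the same route as the paper: establish the Gallager bound $P_e\le 2^{-T(E_0(\rho,Q)-\rho R)}$ while observing that only pairwise independence of codewords is needed (the paper's Appendix~\ref{app:variant} reaches the identical inequality via the deterministic bound $\phi_{\bm}(\by)\le[\sum_{\bm'}(\cdot)^{1/(1+\rho)}]^{\rho}$ followed by conditional Jensen, rather than your condition-on-$(\by,\bx)$-then-union-plus-Markov ordering, but the point at which pairwise independence is invoked is the same), and then specialize to the BSC with uniform $Q$ and optimize over $\rho\in[0,1]$. The paper simply cites \cite[Example~1, \S5.6]{Gallager68} for that last specialization, whereas you spell it out via the tilted-parameter identities $E_0'(\rho)=1-H(q(\rho))$ and $E_0(\rho)-\rho E_0'(\rho)=D(q(\rho)\|p)$, arriving at the same threshold $q=\sqrt p/(\sqrt p+\sqrt{1-p})$ separating cases~(a) and~(b).
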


\subsection{Error probability at rates $R$ close to capacity $C$}

\begin{lemma}\label{cor:bsc}
Consider the same setup as Lemma \ref{lem:bsc} with rate 
$R = n/T = 1 - H(q)$ close to capacity $C = 1 - H(p)$, that is, $q \approx p$.  Then,
\begin{align}
P_e & \leq 2^{-T D(q \| p)} ~\approx~2^{-T \kappa_p (C-R)^2}, 
\end{align}
where $\kappa_p^{-1} = \Theta\big(p (1-p) \big(\log \frac{(1-p)}{p}\big)^2\big)$. 
\end{lemma}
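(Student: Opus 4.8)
The plan is to reduce the statement to a second-order Taylor expansion of $D(q\|p)$ together with a first-order expansion of the gap $C - R = H(q) - H(p)$ around $q = p$. First I would verify that the hypothesis $q \approx p$ puts us in case (a) of Lemma~\ref{lem:bsc}: since $p \in (0,1/2)$ one has $p + \sqrt{p(1-p)} < 1$, which is equivalent to $p < \sqrt{p}/(\sqrt{p} + \sqrt{1-p})$, so for every $q$ in a sufficiently small fixed neighborhood of $p$ the threshold condition $q \le \sqrt{p}/(\sqrt{p}+\sqrt{1-p})$ holds, and Lemma~\ref{lem:bsc}(a) gives $P_e \le 2^{-T D(q\|p)}$. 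It then remains only to compare $D(q\|p)$ with $(C-R)^2$.

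Next I would expand both quantities in $\delta := q - p$. For the divergence, $\frac{\partial}{\partial q} D(q\|p) = \log\frac{q(1-p)}{p(1-q)}$ vanishes at $q = p$, while $\frac{\partial^2}{\partial q^2} D(q\|p) = \frac{1}{q(1-q)\ln 2}$, so
\begin{align}
D(q\|p) = \frac{\delta^2}{2\,p(1-p)\ln 2} + O(\delta^3).
\end{align}
For the gap, $H'(p) = \log\frac{1-p}{p}$, which is nonzero precisely because $p \neq 1/2$, hence
\begin{align}
C - R = H(q) - H(p) = \delta\,\log\frac{1-p}{p} + O(\delta^2), \qquad (C-R)^2 = \delta^2\Big(\log\frac{1-p}{p}\Big)^2 + O(\delta^3).
\end{align}
Dividing, $D(q\|p) = \kappa_p\,(C-R)^2\,\big(1 + O(\delta)\big)$ with $\kappa_p^{-1} = 2\ln 2 \cdot p(1-p)\big(\log\frac{1-p}{p}\big)^2 = \Theta\big(p(1-p)\big(\log\frac{1-p}{p}\big)^2\big)$, which is exactly the claimed relation.

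Finally, to give the informal ``$\approx$'' the teeth that the downstream argument actually needs (a genuine lower bound $D(q\|p) \gtrsim \kappa_p (C-R)^2$), I would replace the expansions by explicit two-sided bounds on a fixed neighborhood such as $|\delta| \le p/2$: writing $D(q\|p) = \int_p^q \int_p^s \frac{dt\,ds}{t(1-t)\ln 2}$ and bounding $t(1-t)$ above and below on the interval of integration gives $c_1(p)\,\delta^2 \le D(q\|p) \le c_2(p)\,\delta^2$, while the mean value theorem applied to $H$ gives $|C-R|$ within constant factors of $|\delta|\log\frac{1-p}{p}$; combining the two yields $D(q\|p) = \Theta\big((C-R)^2\big)$ with $p$-dependent constants matching the stated $\kappa_p$. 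There is no real obstacle here --- the argument is pure calculus; the only point to keep an eye on is that the constants degrade as $p \to 1/2$ (where $H'(p) \to 0$), but $p$ is a fixed channel parameter so this is harmless.
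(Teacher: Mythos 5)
Your proposal is correct and follows essentially the same route as the paper's proof: invoke case (a) of Lemma~\ref{lem:bsc}, expand $D(q\|p)$ to second order (yielding $(q-p)^2/(p(1-p)\ln 4)$, identical to your $\delta^2/(2p(1-p)\ln 2)$) and $H$ to first order to get $q-p \approx (C-R)/\log\frac{1-p}{p}$, then combine. Your added verification of the case (a) threshold and the explicit two-sided bounds are harmless refinements of the same argument.
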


\begin{proof}
  From Lemma \ref{lem:bsc}, for all $q$ close enough to $p$, $P_e \leq 2^{-T D(q \| p)}$. Now, consider
  $p$ fixed and let $F(q) = D(q \| p)$ be function of $q$. Then, by
  Taylor's expansion of $F(q)$ around $p$,
\begin{align}
F(q) & = F(p) + F'(p) (q-p) + F''(\theta) (q-p)^2/2, 
\end{align}
for $\theta \in [p, q]$. Noting that
$F'(x)  = \log \Big(\frac{x (1-p)}{(1-x) p}\Big)$ and
$F''(x)  = \frac{1}{x(1-x) \ln 2},$
we see that $F(p) = F'(p) = 0$, and that for $q \approx p$,
\begin{align}\label{eq:cor.1}
F(q) & \approx \frac{(q-p)^2}{p (1-p) \ln 4}.
\end{align}
For the entropy function $H(x) = - x \log x - (1-x) \log (1-x)$, using
the first-order Taylor expansion, we obtain that for $q \approx p$,
\begin{align}
H(q) & \approx H(p) + \log \Big(\frac{1-p}{p}\Big) (q-p). 
\end{align} 
Since $R = 1 - H(q)$ and $C = 1-H(p)$,
\begin{align}\label{eq:cor.2}
q-p & \approx \frac{(C-R)}{\log \big(\frac{1-p}{p}\big)}. 
\end{align}
The desired claim follows from \eqref{eq:cor.1} and \eqref{eq:cor.2}.
\end{proof}

\subsection{Rates achievable by spinal codes}
 
The following claim shows that a spinal code over the BSC can achieve
rates arbitrarily close to the channel capacity, $C$, for large
$k$. Hence, Lemma~\ref{cor:bsc} is applicable.

\begin{claim}\label{claim:close_to_capacity}
There exists $L \geq 1$ so that the rate induced by the spinal code
at the end of pass $L$  satisfies
$$C - R  = \Theta\Big(\frac{C^2}{k}\Big). $$
\end{claim}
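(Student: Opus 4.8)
The plan is to use the fact that the spinal code offers only the discrete menu of rates $R_\ell = k/\ell$, $\ell = 1,2,\dots,\nu$ (after pass $\ell$ the code has emitted $\ell n/k$ coded bits), and to exhibit one pass index $L$ from this menu whose rate lies a distance $\Theta(C^2/k)$ below $C$. Since $p\in(0,1/2)$ is fixed, $C = 1-H(p)$ is a positive constant, so $k/C = \Theta(k)$ and any index $L = O(k)$ corresponds to a genuinely attainable rate because $\nu = \Theta(k^2\log n) \gg k$ for the relevant $n$; this disposes of the bookkeeping check that the chosen pass actually exists.

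First I would take $L^\star = \lceil k/C\rceil$, the smallest pass whose rate does not exceed $C$. A one-line estimate, $0 \le C - R_{L^\star} = C - k/\lceil k/C\rceil \le C - \tfrac{k}{k/C+1} = \tfrac{C^2}{k+C} \le \tfrac{C^2}{k}$, already gives the upper half of the claimed $\Theta$. For the lower half I would use the spacing of the menu: consecutive attainable rates satisfy $R_\ell - R_{\ell+1} = \tfrac{k}{\ell(\ell+1)}$, which for $\ell$ within a constant factor of $k/C$ is $\Theta(C^2/k)$, two-sided, since there $\ell(\ell+1) = \Theta(k^2/C^2)$. Then I would split into two cases: if $C - R_{L^\star} \ge \tfrac14\cdot\tfrac{C^2}{k}$, take $L = L^\star$; otherwise take $L = L^\star+1$ and write $C - R_{L^\star+1} = (C-R_{L^\star}) + (R_{L^\star}-R_{L^\star+1})$, where the first term lies in $[0,\tfrac14\tfrac{C^2}{k})$ and the second is $\Theta(C^2/k)$, so the sum is $\Theta(C^2/k)$ as well. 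In either case $C - R_L = \Theta(C^2/k)$, and in particular $R_L < C$, so that Lemma~\ref{cor:bsc} is applicable at pass $L$, which is the sole purpose of this claim.

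The only real obstacle is that the rate menu is discrete and may be badly aligned with $C$ --- for instance, if $k/C$ happens to be an integer then $R_{L^\star} = C$ exactly and the gap is $0$, not $\Theta(C^2/k)$ --- so no single candidate works in general. The two-candidate argument above is precisely the fix, and it works because near capacity the gaps between successive attainable rates are themselves $\Theta(C^2/k)$, so the target window $[c_1 C^2/k,\, c_2 C^2/k]$ cannot be skipped over. Everything else is elementary arithmetic.
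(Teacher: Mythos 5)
Your argument is correct and is essentially the paper's: both select a pass index within a couple of steps of $k/C$ and exploit the fact that near capacity the successive attainable rates $k/\ell$ are spaced $\Theta(C^2/k)$ apart. The only cosmetic difference is that the paper avoids your two-candidate case split by always taking the integer $L$ with $\frac{k}{C}+1 < L \le \frac{k}{C}+2$ (i.e., one step past the capacity-crossing index), which immediately gives $\frac{C}{L} < C - \frac{k}{L} \le \frac{2C}{L}$ and hence the two-sided bound without further cases.
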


\begin{proof}
Consider $L$ such that $\frac{k}{L-2} \geq C > \frac{k}{L-1}$.  
These conditions may be rewritten as
\begin{align}
\frac{k}{C} + 1 &< L ~\leq \frac{k}{C} + 2 &
\frac{C}{L} & < C - \frac{k}{L} ~\leq \frac{2C}{L}. \nonumber
\end{align}  
Hence, $L=\Theta(\frac{k}{C})$, and $C - R = \Theta(\frac{C}{L})$. Together $C - R = \Theta(\frac{C^2}{k})$. 
\end{proof}

\subsection{Proof of Theorem \ref{thm:bsc}}

We now establish that by the end of pass $L$, chosen as above, decoding happens with
high probability. We shall prove that if $B=n^{O(k^3)}$, with high probability, for $i^* = \Theta\big(k^2 C^{-4} \kappa_p^{-1} \log
n\big)$, when processing the \ith~ spine value, all non-pruned
codewords either agree with the $i - i^*$ true spine values (so there
are
less than $B$ of them), or are less likely than the true spine (so
cannot cause the true spine to be pruned out). As a consequence, the
true spine is never pruned, so the decoder manages to decode all but
$i^*k = \Theta\big(k^3 C^{-4} \kappa_p^{-1} \log n\big)$ bits.

The following proposition is an implication of the strong avalanche
criterion. 

\begin{proposition}\label{def:output_avalanche}
Let $\bm$, $\bm'$ be two messages differing in message block
$\bar{m}_{i}$. Let $\{s_j\}$ and $\{s_j'\}$ be the spines for $\bm$
and $\bm'$, respectively. Then, 
$$ \P(\exists j \in \{1,\cdots g\} : s_{i+j} = s_{i+j}') \leq g \cdot 2^{-\nu}$$
If such a $j$ does not exist, then all the bits of $s_{i+1} \cdots
s_{i+g}$ are independent of bits of $s_{i+1}' \cdots s_{i+g}'$, and
each of them has a uniform independent distribution.
\end{proposition}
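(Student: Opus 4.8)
The idea is to reveal the values of $\hash$ along the two chains in the order in which they are queried, keeping the invariant that the true chain and the candidate chain always probe $\hash$ at \emph{distinct} inputs; pairwise independence then makes each newly revealed output a fresh uniform $\nu$-bit string, which gives the collision bound and the independence/uniformity claim at once. The setup is as follows: since $\bm$ and $\bm'$ differ only in block $\bar m_i$, the recursion $s_j=\hash(s_{j-1},\bar m_{j-1})$ forces $s_j=s_j'$ for all $j\le i$ — write $\sigma:=s_i=s_i'$ — while $\bar m_{i+\ell}=\bar m_{i+\ell}'$ for every $\ell\ge1$. Hence the chains read $s_{i+1}=\hash(\sigma,\bar m_i)$ and $s_{i+1}'=\hash(\sigma,\bar m_i')$ with $\bar m_i\neq\bar m_i'$, and thereafter $s_{i+\ell+1}=\hash(s_{i+\ell},\bar m_{i+\ell})$, $s_{i+\ell+1}'=\hash(s_{i+\ell}',\bar m_{i+\ell})$ for $\ell\ge1$; that is, after the divergence the candidate chain uses the \emph{same} block inputs as the true chain but evolves from a different state.

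\smallskip\noindent\emph{Collision bound.} Let $C_j$ be the event $s_{i+j}=s_{i+j}'$. Then
\[ \P\Big(\bigcup_{j=1}^g C_j\Big)\;=\;\sum_{j=1}^g \P\big(C_j\cap\overline{C_1}\cap\cdots\cap\overline{C_{j-1}}\big)\;\le\;\sum_{j=1}^g \P\big(C_j \mid \overline{C_1}\cap\cdots\cap\overline{C_{j-1}}\big), \]
with the empty intersection for $j=1$. Conditioned on $\overline{C_1}\cap\cdots\cap\overline{C_{j-1}}$ the inputs feeding step $j$ are distinct — for $j\ge2$ because $s_{i+j-1}\neq s_{i+j-1}'$, and for $j=1$ because $(\sigma,\bar m_i)\neq(\sigma,\bar m_i')$ — so by pairwise independence their $\hash$-images are independent and uniform on $\{0,1\}^\nu$, giving $\P(C_j\mid\cdots)\le 2^{-\nu}$. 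Summing yields $\P(\exists\,j\in\{1,\dots,g\}:\,s_{i+j}=s_{i+j}')\le g\,2^{-\nu}$.

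\smallskip\noindent\emph{Independence and uniformity.} On the complementary event no candidate probe in the window $i{+}1,\dots,i{+}g$ coincides with a true probe, so the $2g$ outputs $s_{i+1},\dots,s_{i+g},s_{i+1}',\dots,s_{i+g}'$ are produced at pairwise distinct inputs; revealing them in query order, each is a fresh uniform draw, so $(s_{i+1},\dots,s_{i+g})$ and $(s_{i+1}',\dots,s_{i+g}')$ are jointly uniform on $(\{0,1\}^\nu)^g\times(\{0,1\}^\nu)^g$ — in particular the first block is independent of the second and every spine bit is uniform. Every coded bit generated in this window is a fixed bit-position of one of these spine values, so the coded bits inherit (i) i.i.d.\ uniformity within each codeword, (ii) independence of the true-codeword window from the candidate-codeword window, and (iii) symmetry of the joint distribution — exactly Property~\ref{prop:bsc} restricted to the window, which is what lets the proof of Theorem~\ref{thm:bsc} apply Lemma~\ref{lem:bsc} / Lemma~\ref{cor:bsc} to each deeply diverging candidate.

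\smallskip\noindent\emph{Main obstacle.} The point to handle with care is the exposure bookkeeping behind the two ``by pairwise independence'' steps: conditioning $\hash$ on many already-revealed probes need not, under bare pairwise independence, leave a fresh pair exactly uniform. What saves it is that each chain's sequence of probes forms a \emph{path} — the $\ell$-th probe is a deterministic function of the $(\ell-1)$-st response — so all probes can be revealed in one fixed order, and at each step one only needs the new pair to be independent of the finitely many values already exposed; this is the ``sufficiently random codebook'' strength of $\cH$ that the (strong) avalanche property supplies. One must also set aside the event that a chain revisits one of its own earlier probes (a within-chain hash collision), which is negligible and is already accounted for in the proof of Theorem~\ref{thm:bsc}; conditioned away from that as well, the bound $g\,2^{-\nu}$ and the window-level form of Property~\ref{prop:bsc} follow as above.
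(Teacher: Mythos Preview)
Your approach matches the paper's: at each step, pairwise independence of $\hash$ at distinct inputs gives a $2^{-\nu}$ collision probability and the union bound yields $g\cdot 2^{-\nu}$; on the no-collision event, iterating the same step gives the joint uniformity and independence of the two spine windows. If anything your writeup is more careful than the paper's short proof, which simply glosses over the exposure-bookkeeping subtlety you flag in your ``main obstacle'' paragraph.
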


\begin{proof}
  Due to the pairwise independence property of hash function, when two different inputs are
  passed through the hash function, the output bits corresponding to
  these input bits are independent of each other and each of them is
  distributed independently and uniformly.  Therefore, the chance of
  two different inputs producing the same output is $2^{-\nu}$. By the
  union bound, the probability of such an event happening over a
  series of $g$ spine values is bounded by $g \cdot 2^{-\nu}$. By
  iteratively applying the property that when spine values differ at
  some stage $t$, the bits produced at stage $t+1$ are independent and
  uniformly distributed, we conclude that if all spines are different,
  their bits are independent and distributed uniformly.
\end{proof}



\paragraph{Proving Theorem \ref{thm:bsc} assuming no collisions.} We
establish Theorem \ref{thm:bsc} assuming no hash function
collisions. Later we show that collisions happen with low
probability. We require $\nu = \Theta(k^2 \log n)$ with a large-enough
constant multiplier in $\Theta(\cdot)$ term.   
Throughout, we will assume that this $\bm \in \{0,1\}^n$ is a fixed choice
that was transmitted. Establishing that with high probability (with respect to
all randomness in code construction and channel noise) it gets decoded
will imply all messages get decoded with high probability due to 
symmetry of the random-code and memoryless property of the 
BSC noise model (or more generally, any memoryless channel).

\begin{lemma}
  Consider the greedy spinal decoder operating after all coded bits of
  the $L$ passes are received.  Assuming no hash collisions, the
  decoder decodes all but the last $O(k^3 \log n)$ bits correctly with
  probability $1-O(1/n^4)$.
\end{lemma}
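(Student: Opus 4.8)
The plan is to fix the transmitted message $\bm$ once and for all, condition throughout on the (assumed) absence of hash collisions so that Proposition~\ref{def:output_avalanche} applies verbatim, and then prove by induction on the decoding depth $i$ the following invariant $\mathcal I_i$: the true spine node $s_i$ lies in the beam $\mathcal B_i$, and every other node $v\in\mathcal B_i$ either agrees with the true spine on its first $i-i^\star$ blocks or has path cost exceeding the true path cost $c^\star_i:=\sum_{d\le i}d_H(\by_d,\bx_d(s_d))$ by a margin of more than $L$, where $L$ is the number of passes (so $L=\Theta(k/C)$ by Claim~\ref{claim:close_to_capacity}) and $i^\star=\Theta(k^2 C^{-4}\kappa_p^{-1}\log n)$. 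Granting $\mathcal I_{n/k}$, the min-cost leaf returned by the decoder cannot be one of the ``costly'' leaves --- the true leaf, being present with lower cost, beats it --- so it agrees with $\bm$ on all but the last $i^\star k=O(k^3\log n)$ bits, which is the assertion. The base case $i\le i^\star$ is immediate: at most $2^{k i^\star}\le B$ nodes exist at such a depth, so nothing is pruned, and the agreement condition is vacuous there.

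The probabilistic engine I would isolate is a single large-deviation estimate over a window of coded bits. For a prefix $\bm'$ first differing from $\bm$ in block $\bar m_j$, Proposition~\ref{def:output_avalanche} makes the spine values $s'_{j+1},s'_{j+2},\dots$ i.i.d.\ uniform and independent of the channel noise; hence for any window of $w$ consecutive post-divergence blocks, the candidate's cost over that window is $\mathrm{Binomial}(wL,\tfrac12)$ while the true path's cost over the same blocks is $\mathrm{Binomial}(wL,p)$. The difference is a sum of $wL$ i.i.d.\ bounded increments of mean $\tfrac12-p>0$ --- a positive-drift random walk in the block index --- so by a Chernoff/Hoeffding bound the candidate's window cost exceeds the true one by more than $L$ except with probability $2^{-\Omega(wL)}$, the hidden positive constant depending only on $p$ and the estimate being meaningful once $wL$ passes an absolute threshold. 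Using $L=\Theta(k/C)$ together with $\kappa_p^{-1}=\Theta\!\big(p(1-p)\log^2\tfrac{1-p}{p}\big)$, the choice of $i^\star$ is made precisely so that this estimate holds simultaneously over the family of windows of length $\Theta(i^\star)$ that arise in the induction.

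The inductive step $\mathcal I_{i-1}\Rightarrow\mathcal I_i$ then closes by expanding $\mathcal B_{i-1}$ and inspecting the children. A child of a node already agreeing with the true spine on $i-1-i^\star$ blocks still agrees on $i-i^\star$ blocks unless it ``freshly diverges'' at depth $i$, in which case the window estimate, applied to the $\Theta(i^\star)$ post-divergence blocks ending at $i$, pushes its cost above $c^\star_i+L$. A child of a node that was already more than $L$ above $c^\star_{i-1}$ has cost at least its parent's, hence above $c^\star_{i-1}+L\ge c^\star_i$ since the true branch cost at depth $i$ is at most $L$; applying the window machinery again over the last $\Theta(i^\star)$ blocks (on which such a child is entirely post-divergence) maintains the full margin. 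Thus every depth-$i$ candidate with cost $\le c^\star_i$ agrees with $\bm$ on its first $i-i^\star$ blocks, of which there are at most $2^{k i^\star}\le B$, so $s_i$ (of cost exactly $c^\star_i$) survives the pruning and $\mathcal I_i$ holds. Taking $i=n/k$ and collecting the $O(n/k)\cdot 2^{O(k i^\star)}$ window-estimate failure events (plus a Chernoff bound for the concentration of $c^\star_i$ and its per-window increments) gives total failure probability $O(n^{-4})$.

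The hard part will be exactly this interplay between the beam dynamics and the counting. A naive union bound over all $\approx 2^n$ incorrect prefixes is hopeless --- this is, after all, the regime at a rate above the cutoff rate, where plain sequential (single-candidate) decoding is known to fail --- so one must exploit the sequential structure and the beam width $B=n^{O(k^3)}$ to argue that only the $\approx 2^{k i^\star}$ ``freshly divergent'' prefixes per depth can ever threaten the true node, and then check that the window-level exponent $\Omega(i^\star L)$ is large enough relative to $k i^\star+\log n$. Pinning down this quantitative comparison, and thereby the admissible value of $i^\star$ --- equivalently the number of undecoded tail bits --- is the crux; the rest is Chernoff-bound bookkeeping.
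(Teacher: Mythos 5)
Your high-level skeleton (fix $\bm$, induct on depth with the invariant that every surviving non-agreeing candidate is costlier than the true path, beam size $B=2^{i^*k}$, union over $n/k$ stages) matches the paper's argument. But the probabilistic engine you propose does not close, and you have in fact left open exactly the step that carries the whole proof. You bound, for each freshly divergent prefix, the probability that its window cost fails to exceed the true window cost by treating the difference as a positive-drift walk and invoking Chernoff/Hoeffding, getting $2^{-\Omega(wL)}$ with a constant ``depending only on $p$,'' and you then union over the $2^{\Theta(ki^*)}$ divergent prefixes per depth. For this union to succeed you need the per-candidate exponent, measured per coded bit, to exceed $k i^*/(i^*L)=R$, and $R$ is within $\Theta(C^2/k)$ of capacity. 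A pairwise difference-of-binomials bound cannot deliver this: the optimal Chernoff exponent for $\P\big(\mathrm{Bin}(N,\tfrac12)\le \mathrm{Bin}(N,p)\big)$ is $\min_x\big[D(x\|\tfrac12)+D(x\|p)\big]=1-\log\big(1+2\sqrt{p(1-p)}\big)$, i.e.\ the cutoff rate $R_0<C$, and the Hoeffding form you quote is weaker still (roughly $C/4$ for $p$ near $1/2$). So per-pair-then-union gives a total bound of order $2^{-i^*L(R_0-R)}$, which is vacuous precisely in the near-capacity regime the theorem addresses; this is the cutoff-rate barrier you yourself allude to, but restricting the union to the $2^{ki^*}$ divergent prefixes does not remove it. Your closing sentence (``check that the window-level exponent $\Omega(i^\star L)$ is large enough relative to $ki^\star+\log n$ \dots is the crux'') concedes the point: that comparison fails for your estimate, so the proof as proposed has a genuine gap at its core.

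The paper closes this gap by not doing a per-pair comparison at all: it invokes the pairwise-independence variant of Gallager's random-coding bound (Lemma~\ref{lem:bsc}, proved in Appendix~\ref{app:variant}), whose $\rho$-trick handles the entire ensemble of $2^{ik-k}$ diverging competitors simultaneously and, via Lemma~\ref{cor:bsc}, yields the exponent $\kappa_p(C-R)^2=\Theta(\kappa_p C^4/k^2)$ per coded bit, positive for every $R<C$; this is what fixes $i^*=\Theta(k^2C^{-4}\kappa_p^{-1}\log n)$ and $B=n^{O(k^3)}$. If you want an elementary substitute, the correct decomposition is two-step: first show the true path's window cost concentrates below $wL(p+\delta)$ with $\delta=\Theta\big(C^2/(k\log\frac{1-p}{p})\big)$ (Chernoff on a single binomial), then union, over the $2^{kw}$ candidates, the tight binomial tail $\P\big(\mathrm{Bin}(wL,\tfrac12)\le wL(p+\delta)+L\big)\le 2^{-wL(1-H(p+\delta)-o(1))}$ against that fixed threshold; the per-candidate exponent is then essentially $C$, the union costs $R$, and one recovers the same $\Theta(wL\,\kappa_p C^4/k^2)$ exponent and the same $i^*$. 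The ``difference random walk'' formulation is exactly what loses this, because it couples each candidate to the noise realization pairwise instead of comparing all candidates to one fixed, concentrated true cost.
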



\begin{proof}
  Consider message $\bm$ that was transmitted and any other message $\bm'$ that differs
  from $\bm$ in any of the first $k$ bits. In the absence of hash
  collisions, as per Lemma \ref{def:output_avalanche}, 
  codewords of $\bm$ and $\bm'$ are independent of each other and each
  of their bits is independent and uniformly distributed over
  $\{0,1\}$.  That is, $\bm$ and $\bm'$ satisfy Property
  \ref{prop:bsc}.

  If we restrict our attention to codewords generated from the first
  $i$ spine values, that is, codewords of length $N = i L$, there are
  $2^{ik-k}$ codewords, one each for a message $\bm'$ that differs
  from $\bm$ in any of the first $k$ bits.  As established above, the
  pair $\bm$ and any other $\bm'$ satisfies Property \ref{prop:bsc}.
  Using Lemmas \ref{lem:bsc} and \ref{cor:bsc}, we obtain that the
  probability that any of the $2^{ik-k}$ messages (that differ from
  $\bm$ in any of the first $k$ bits) is more likely than the original
  message $\bm$ is bounded above by $P_e(i)$, where (with $k$ large
  enough for Lemma \ref{cor:bsc} to be applicable)
\begin{align}\label{eq:bsc-error}
P_e(i) & = 2^{- N \kappa_p (C-R)^2} 
         ~= 2^{- i L \frac{\kappa_p C^4}{k^2}}. 
\end{align}
That is, for $i^* = \Theta\big(k^2 C^{-4} \kappa_p^{-1} \log n\big)$,
the probability of such an error is bounded above by $1 - 1/n^6$ (with
a suitably large constant factor in the $\Theta(\cdot)$ term for $i^*$).
Therefore, after processing
the first $i^*$ spines, the only messages that can have a higher
likelihood than the original message are those that do not differ from
$\bm$ in the first $k$ bits.  There are at most $2^{i^*k-k}$ such
messages and hence if $B = 2^{i^* k} = n^{O(k^3)}$, then the original
message will not be pruned out.

Now we apply the above argument inductively. Consider a stage $j$
where the only messages that are not pruned out and have likelihood
higher than the original message $\bm$ are those that differ from
$\bm$ in a bit position between $jk - i^* k+1$ to $jk$. Now when the
decoder moves to stage $j + 1$, messages that are not pruned out are
expanded by factor $2^k$. Among these, consider the messages that
start differing from the original message in any of the $k$ bit
positions: $jk -i^*k +1, \dots, jk - i^*k + k$. By applying the same
argument as we did above, it follows that at the end of stage $j +
1$, all of these $2^{i^*k - k}$ messages will have likelihood smaller
than the original message with probability at least $1-1/n^6$.

The above invariant together with the union bound implies that at the
end of stage $n/k$, the original message is preserved in the $B$
candidates with probability at least $1-O(1/n^5)$. Further, the most
likely $2^{i^* k}$ of these $B$ candidates are those that have correct
$n - i^*k$ prefix bits.  That is, the decoder manages to decode all
but last $O(i^*k) = O(k^3 \log n)$ bits correctly.
\end{proof}

\paragraph{Dealing with collisions.} 
The above proof uses the fact that in the absence of
collisions, given the original message $\bm$ of interest and any other
message $\bm'$ that differs from $\bm$ in the first $k$ bits, their
corresponding codewords $\bx = \bx(\bm)$ and $\bx' = \bx'(\bm')$
satisfy Property \ref{prop:bsc}.  Therefore, the probability of any
such message $\bm'$ having likelihood higher than $\bm$ is at most
$O(1/n^6)$ as desired.  Note that this is precisely the argument that
is used inductively along with the union bound to establish the claim.
Therefore, it is sufficient to establish that the effect of collision is
negligible for this step only.

We wish to show that the effect of collisions is small, using the
following plan. As stated in Lemma \ref{lem:col-1}, we will identify
an event $\cE$ so that conditioned on it happening, Property
\ref{prop:bsc} is satisfied as above; and, the probability of event
$\cE^c$ is $O(1/n^6)$.  Using this, we will establish that the
probability of any such message $\bm'$ having likelihood higher than
$\bm$ continues to remain at most $O(1/n^6)$ as desired.

\begin{lemma}\label{lem:col-1}
  Let $\bm$ be the $i^*k$ prefix bits of an uncoded message. 
 Consider any other message prefix $\bm'$ of the same length,
with any of the first $k$
  bits differing from $\bm$. Then there exists event $\cE$ so that
\begin{itemize}
\item[(a)] Conditioned on event $\cE$, all pairs of messages $(\bm,
  \bm')$, satisfy Property \ref{prop:bsc}.

\item[(b)] The probability of $\cE^c$ is $O(1/n^6)$. 
\end{itemize}  
\end{lemma}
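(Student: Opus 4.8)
The plan is to define $\cE$ as the event that \emph{no hash collision occurs anywhere in the relevant part of the decoding tree up to depth $i^*$}. Concretely, let $\cE$ be the event that for every pair of distinct message prefixes $\bm, \bm'$ of length $i^*k$ that first differ in block $\bar m_\ell$ (for some $0 \le \ell < i^*$), we have $s_{\ell+j} \neq s'_{\ell+j}$ for all $1 \le j \le i^* - \ell$. In other words, once two prefixes diverge, their spine values stay distinct for the rest of the (truncated) tree. Conditioned on $\cE$, Proposition~\ref{def:output_avalanche} tells us that the subsequent spine bits of the two diverging messages are mutually independent and uniform; since the two prefixes we care about in part~(a) differ in one of the first $k$ bits (i.e., $\ell = 0$), all of $s_1, \dots, s_{i^*}$ for $\bm$ are independent of $s'_1, \dots, s'_{i^*}$ for $\bm'$, and every coded bit is i.i.d. uniform, which is exactly Property~\ref{prop:bsc}. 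So part~(a) is essentially immediate from the definition of $\cE$ together with Proposition~\ref{def:output_avalanche}.

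For part~(b), I would bound $\P(\cE^c)$ by a union bound over all ``first collision'' events. The number of tree nodes at depth at most $i^*$ is at most $2^{i^*k} = n^{O(k^3)}$ (using $i^* = \Theta(k^2 C^{-4}\kappa_p^{-1}\log n)$ and $\nu = \Theta(k^2\log n)$); the number of pairs of such nodes is at most the square of this, and for each fixed divergent pair and each subsequent level $j$, Proposition~\ref{def:output_avalanche} gives collision probability at most $2^{-\nu}$ per level, hence at most $i^* \cdot 2^{-\nu}$ over all levels. Thus
\begin{align}
\P(\cE^c) \;\le\; \big(2^{i^*k}\big)^2 \cdot i^* \cdot 2^{-\nu} \;=\; n^{O(k^3)} \cdot 2^{-\Theta(k^2 \log n)}.
\end{align}
The point is that $\nu = \Theta(k^2\log n)$ is chosen with a large enough hidden constant that $2^{-\nu} = n^{-\Omega(k^2)}$ overwhelms the $n^{O(k^3)}$ factor — wait, this is where care is needed: $O(k^3) > \Omega(k^2)$, so the naive bound is \emph{not} good enough, and the exponent in $\nu$ must be taken as $\Theta(k^3 \log n)$ rather than $\Theta(k^2\log n)$, \emph{or} the union bound must be taken more cleverly (e.g., only over pairs $(\bm,\bm')$ where $\bm'$ differs from $\bm$ in the first $k$ bits and $\bm$ ranges over tree nodes, giving $2^{i^*k}\cdot 2^k = n^{O(k^3)}$ pairs, still $n^{O(k^3)}$). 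Given the theorem statement says $\nu = \Theta(k^2\log n)$, the intended argument must restrict attention to a polynomially bounded set of prefixes: since we only ever need Property~\ref{prop:bsc} to hold for the fixed transmitted $\bm$ against the $2^k - 1 = O(2^k)$ messages $\bm'$ differing from it in the first block (this is the only place collisions matter, as noted in the ``Dealing with collisions'' paragraph), the union is over $O(2^k)\cdot i^* = \poly(n)$ events, each of probability $\le i^* 2^{-\nu} = n^{-\Omega(k^2)}$, yielding $\P(\cE^c) = O(1/n^6)$ for a suitable constant in $\nu$.

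The main obstacle I anticipate is exactly this bookkeeping: being precise about \emph{which} collection of message pairs $\cE$ must control. Defining $\cE$ too broadly (all pairs of depth-$i^*$ nodes) forces $\nu = \Omega(k^3\log n)$; defining it correctly — restricting to the fixed $\bm$ versus its first-block neighbors, as the decoder-analysis induction actually requires — keeps the union bound at $\poly(n)$ size and makes $\nu = \Theta(k^2 \log n)$ suffice. A secondary subtlety is that $\cE$ as I've defined it refers only to the randomness of the hash function (code construction), not the channel, so conditioning on $\cE$ does not disturb the channel noise distribution; this is what lets us still invoke Lemma~\ref{lem:bsc} conditionally. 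Once $\cE$ is pinned down, part~(a) follows from Proposition~\ref{def:output_avalanche} and part~(b) from the union bound above, completing the lemma.
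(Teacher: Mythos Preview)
Your proposal captures the right intuition but has two genuine gaps, and the paper's proof takes a different technical route to avoid the first one.

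\textbf{Part (a): conditioning on ``no collision'' does not give Property~\ref{prop:bsc} exactly.} You invoke Proposition~\ref{def:output_avalanche} to conclude that, conditioned on your event $\cE$, the coded bits of $\bm$ and $\bm'$ are i.i.d.\ uniform and independent. This is not literally true: conditioned on $s_j\neq s'_j$ for all $j$, the pair $(s_j,s'_j)$ is uniform on the \emph{off-diagonal} of $\{0,1\}^\nu\times\{0,1\}^\nu$, not on the full product. Consequently the first $L$ bits of $s_j$ and $s'_j$ are only approximately independent (the deviation is $O(2^{2L-\nu})$ per level). The paper does not condition on the explicit no-collision event at all. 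Instead it proves, by a recursion over spine levels, the pointwise lower bound
\[
\P(\bx=\bb,\,\bx'=\bb') \;\ge\; \P(\bx=\bb)\,\P(\bx'=\bb')\,\bigl(1-2^{-\nu+2L}\bigr)^{i^*},
\]
and then uses this domination to write the true joint law as a mixture $\alpha\cdot(\text{product law})+(1-\alpha)\cdot q$ with $\alpha=(1-2^{-\nu+2L})^{i^*}$. The abstract event $\cE_{\bm,\bm'}$ is the ``product component'' indicator of this mixture, on which Property~\ref{prop:bsc} holds \emph{exactly} by construction; $\cE$ is then the intersection over $\bm'\in\cM'(\bm)$. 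Note the $2L$ correction in the exponent: it is exactly the cost of controlling the $L$ coded bits per spine, and it does not appear in your bound.

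\textbf{Part (b): the count of competing $\bm'$ is wrong.} The lemma concerns all prefixes $\bm'$ of length $i^*k$ that differ from $\bm$ \emph{somewhere} in the first $k$ bits; the remaining $(i^*-1)k$ bits of $\bm'$ are arbitrary. Thus $|\cM'(\bm)|=(2^k-1)\cdot 2^{(i^*-1)k}\approx 2^{i^*k}$, not $O(2^k)$. Your attempted rescue (``only $2^k-1$ first-block neighbors'') conflates ``differs in the first block'' with ``differs only in the first block,'' and would not suffice for the decoder analysis, which must rule out every depth-$i^*$ candidate that branched at level~1. The paper's union bound is therefore over $2^{i^*k-k}$ events $\cE_{\bm,\bm'}^c$, each of probability $\le i^*2^{-\nu+2L}$, giving $\P(\cE^c)=O\bigl(i^*\,2^{i^*k+2L-\nu}\bigr)$; the required $\nu$ is then $i^*k+2L+\log i^*+6\log n$, which is what drives the choice of $\nu$ in Theorem~\ref{thm:bsc}.
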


The proof of Lemma \ref{lem:col-1} is in
Appendix~\ref{app:collision}. Using the above propositions, we
complete the proof of
Theorem \ref{thm:bsc} here.  Define the event {\sf err} as the one in
which
the likelihood of an undesirable message (prefix) $\bm'$ is higher
than original message (prefix) $\bm$. Conditioned on event $\cE$, as
per Lemma \ref{lem:col-1}(a), Property \ref{prop:bsc} is satisfied by
all relevant codeword pairs as desired in the proof of Theorem
\ref{thm:bsc} in the absence of collisions. Therefore, conditioned on
event $\cE$, and the arguments presented earlier for the no-collision case,
it follows that
$\P\big({\sf err} | \cE \big)  = O\big(1/n^6\big). $
That, together with Lemma \ref{lem:col-1}(b), yields 
\begin{align}
\P\big({\sf err} \big) & = \P\big({\sf err} \cap \cE \big) + \P\big({\sf err} \cap \cE^c \big) \nonumber
 \leq \P\big({\sf err} | \cE \big) \P\big(\cE\big) +  \P\big(\cE^c\big) \nonumber 
 \leq \P\big({\sf err} | \cE \big) + O\big(1/n^6\big) \nonumber 
 = O\big(1/n^6\big). 
\end{align} 
This completes the proof of Theorem \ref{thm:bsc}.

\section{Performance of Spinal Codes over AWGN}
\label{s:awgn}

The main result of this section is that spinal codes achieve Shannon
capacity over the AWGN channel with a polynomial-time encoder and
decoder.  The arguments are similar to the BSC case.

\paragraph{AWGN channel model.}
The transmitter's primary resource is power, measured as the squared
value of the output symbols.  Typically, for regulatory and practical
reasons, the average power should be $\le P$ for some $P$.  If an
$n$-bit message $\bm = (m_1,\dots, m_n)$ is mapped to $T$ symbols
$\bx(\bm) = (x_1(\bm),\dots, x_T(\bm))$, then the power of $\bx(\bm)$
is $\frac1T\sum_i x_i^2(\bm)$.  The rate of such a code is $R = n/T$
bits/symbol. When these symbols are transmitted over the AWGN channel,
the receiver sees $\by = \bx + \bz$, where noise-vector $\bz =
(z_1,\dots, z_T)$ has i.i.d. Gaussian components with mean $0$ and
variance $\sigma^2$. The capacity of this channel is
$\CAP(P)  = \frac{1}{2} \log_2 \left(1+ \snr\right)
~\text{bits/symbol}$, 
where $\snr = \frac{P}{\sigma^2}$ denotes the signal-to-noise ratio. 

\paragraph{Encoder and Decoder.}
The procedure described in \S\ref{ss:bscenc} for generating output
symbols for the BSC is modified slightly to produce a stream of coded
symbols in $\mathbb R$. 
The modified encoder generates coded symbols from each $\nu$-bit spine
value:
in the first pass, the encoder produces $n/k$ symbols $x_1,\dots,
x_{n/k}$ using the $\MapN$ most significant bits of $s_1,\dots,
s_{n/k}$, respectively. In the next pass, the next $\MapN$
most-significant bits are used, and so on.

The sequence of $c$ input bits is treated as a binary number
$b\in\{0,\ldots,2^\MapN-1\}$. The encoder computes each output symbol
as $x_i = \Phi^{-1}\!\left(\gamma + (1 - 2\gamma)u\right)\!\sqrt{P}$,
where $\Phi$ is the CDF of standard Gaussian, $u = (b+1/2) / 2^\MapN$,
and $\gamma = \Phi(-\beta)$.  The symbols generated are in the range
$[-\beta \sqrt{P}, \beta \sqrt{P}]$, and within that range they are
distributed like a Gaussian with mean $0$ and variance $P$, quantized
into $2^\MapN$ equally-probable values. When $\beta, \MapN \to \infty$,
the coded symbols will be i.i.d. Gaussian.

The only change to the decoder is to use the squared $\ell_2$
Euclidean distance instead of the Hamming distance in
\eqref{eq:decompose}.  The intuition is that in each case, given the
channel parameters, the distance metric gives (up to normalization)
the log likelihood that a message is correct given the observation
$\mathbf y$.

\paragraph{Performance over AWGN.}
The following result shows that spinal codes achieve nearly optimal
rates over the AWGN channel in a rateless manner with efficient
encoding and decoding algorithms.

\begin{theorem}\label{thm:awgn}
Consider an AWGN channel with noise variance bounded below by
$\sigma_{\text{min}}^2$.  Consider a spinal code constrained to have
average power $\le P$, with $k > \frac{1}{2} \log\big(1+
P/\sigma^2_{\text{min}}\big)$. Let the code map $n$ message bits to
coded symbols with $\beta, \MapN$ as per \eqref{eq:awgn.7}, with
$\beps = 1/k$ and $\sigma_{\text{min}}$ in place of $\sigma$ in
\eqref{eq:awgn.7}. Let  $\nu = \Theta(k^2 \MapN \log n)$, and let the
decoder operate with $B = n^{O(k^2)}$. Then the decoder will correctly
decode all but the last $O(k^2 \log n)$ message bits with probability
at least  $1-1/n^2$ within time $T$ such that the induced rate $R =
n/T$ satisfies
\begin{align}\label{eq:awgn}
R & \geq \CAP(P) - O(1/k).
\end{align}
\end{theorem}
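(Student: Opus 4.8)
The plan is to follow the architecture of the proof of Theorem~\ref{thm:bsc} almost verbatim, with the squared Euclidean distance replacing Hamming distance in the cost decomposition \eqref{eq:decompose}, and with the BSC random-coding estimates (Lemmas~\ref{lem:bsc} and~\ref{cor:bsc}) replaced by analogues for the AWGN channel under the quantized, truncated Gaussian input law $Q_{\beta,\MapN}$ used by the modified encoder. The spine/avalanche machinery (Proposition~\ref{def:output_avalanche}) and the collision bookkeeping (Lemma~\ref{lem:col-1}) are channel-agnostic and are reused as is; in particular, each pass emits $n/k$ coded symbols, so the induced rate after $L$ passes is still $k/L$. Thus the only genuinely new work is: (i) a Gallager-type random-coding bound for AWGN, valid under mere pairwise independence of codewords, that decays at rates $R$ near $\CAP(P)$ like $2^{-T\,\Theta((\CAP(P)-R)^2)}$ with hidden constant depending only on $\snr \le P/\sigma_{\text{min}}^2$; and (ii) control of the rate and exponent loss caused by using the finite-precision, amplitude-limited codebook instead of the ideal i.i.d.\ Gaussian one.

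For (ii), I would first show that $Q_{\beta,\MapN}$ — the uniform law on $2^\MapN$ equiprobable quantiles of the $N(0,P)$ distribution truncated to $[-\beta\sqrt P,\beta\sqrt P]$ — has output power at most $P$ and mutual information $I(Q_{\beta,\MapN}) \ge \CAP(P) - O(1/k)$, bounding separately the loss from truncation (exponentially small in $\beta$) and the loss from $2^\MapN$-level quantization (polynomially small in $2^{-\MapN}$); the parameter choice \eqref{eq:awgn.7} with $\beps=1/k$ and $\sigma_{\text{min}}$ in place of $\sigma$ is exactly what makes both losses $O(1/k)$, fixing $\MapN=\Theta(\log k)$ and $\beta=\Theta(\sqrt{\log k})$. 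For (i), I would invoke Gallager's random-coding exponent for a memoryless channel with input distribution $Q_{\beta,\MapN}$ \cite[Chapters 5, 7]{Gallager68} and re-run the pairwise-independence argument of Lemma~\ref{lem:bsc} (Appendix~\ref{app:variant}) in the AWGN setting: the decoding-error analysis uses only a union/Jensen step over competing messages together with control of pairwise codeword overlaps, so pairwise independence suffices, and — just as in the BSC case, where the near-capacity exponent is the quadratic $D(q\|p)$ rather than the linear $\CAP-R$ — the fluctuation of the true codeword's likelihood keeps the exponent quadratic in $\CAP(P)-R$. A Taylor expansion of $E_0(\rho,Q_{\beta,\MapN})$ about $\rho=0$ then gives $P_e \le 2^{-T\,\kappa(\CAP(P)-R)^2}$ with $\kappa^{-1}$ bounded in terms of $\snr$, the AWGN analogue of Lemma~\ref{cor:bsc}.

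With (i) and (ii) in hand the remainder is transcription. As in Claim~\ref{claim:close_to_capacity}, choose $L=\Theta(k/\CAP(P))$ so that $R=k/L$ has $\CAP(P)-R=\Theta(\CAP(P)^2/k)$; combined with the $O(1/k)$ input-law loss this gives $R \ge \CAP(P)-O(1/k)$, i.e.\ \eqref{eq:awgn}, once decoding succeeds. Here $\nu=\Theta(k^2\MapN\log n)$ is chosen so that every spine value carries enough bits for all $L$ passes (each pass consuming $\MapN$ bits of each spine value) and so that the collision probability $2^{-\nu}$ of Proposition~\ref{def:output_avalanche} survives union bounds over the $n/k$ stages and the $B=n^{O(k^2)}$ retained candidates. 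Conditioned on no collisions, any message differing from the transmitted $\bm$ in its first $k$ bits has a codeword pairwise-independent of $\bx(\bm)$; applying the bound of (i) to length-$iL$ prefixes, after $i^\ast=\Theta(k\log n)$ spine values all such competitors are ML-dominated by $\bm$ with probability $1-O(1/n^6)$, so with $B=2^{i^\ast k}=n^{O(k^2)}$ the true spine is never pruned. Inducting over the $n/k$ stages with a union bound exactly as in the proof of Theorem~\ref{thm:bsc} then gives correct decoding of all but the last $i^\ast k=O(k^2\log n)$ bits, and the collision contribution is absorbed via Lemma~\ref{lem:col-1} as before, for total failure probability at most $1/n^2$.

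The main obstacle is step (ii): pinning down precisely how much the finite $\MapN$ and finite $\beta$ degrade the achievable rate and the normalized error exponent, so as to justify the scalings asserted in \eqref{eq:awgn.7} and keep both losses $O(1/k)$ uniformly for $\snr \le P/\sigma_{\text{min}}^2$. Step (i) is a matter of care rather than a new idea — one re-does Gallager's continuous-output exponent bound under pairwise independence — and everything downstream of (i)–(ii) is a near-mechanical re-run of the BSC argument with the substitutions above.
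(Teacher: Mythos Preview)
Your plan is correct and matches the paper's architecture almost exactly: reuse the spine/avalanche/collision machinery verbatim, swap Hamming for squared Euclidean distance, and replace the BSC random-coding bound by an AWGN analogue under pairwise independence, with Claim~\ref{claim:close_to_capacity} and the inductive pruning argument carried over unchanged.

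The one substantive point of departure is the \emph{form} of the AWGN error bound. You propose the near-capacity quadratic exponent $P_e\le 2^{-T\kappa(\CAP-R)^2}$ obtained by Taylor-expanding $E_0(\rho,Q_{\beta,\MapN})$ about $\rho=0$, in direct analogy with Lemma~\ref{cor:bsc}. The paper instead works at $\rho=1$: Lemma~\ref{cor:awgn} bounds $E_o(1,Q_{\beta,\MapN})$ from below by an explicit integral, and Claim~\ref{claim:awgn.params} shows that with the parameter choices \eqref{eq:awgn.7} one gets $E_o(1,Q_{\beta,\MapN})\ge \CAP-\beps$, yielding the \emph{linear} bound $P_e\le 2^{-T(\CAP-\beps-R)}$. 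Since $\CAP-R=\Theta(1/k)$, this linear exponent is a factor $k$ larger than your quadratic one; it is precisely this gain that the paper cites for the drop from the BSC's $i^*=\Theta(k^2\log n)$, $B=n^{O(k^3)}$ to AWGN's $i^*=\Theta(k\log n)$, $B=n^{O(k^2)}$. Under the bookkeeping used in the BSC proof (where $L$ is left unsubstituted in $i^*$ and the channel constants are kept explicit), your quadratic bound would reproduce the BSC scalings $n^{O(k^3)}$ and $O(k^3\log n)$ undecoded bits, not the ones asserted in Theorem~\ref{thm:awgn}; your arithmetic reaches $i^*=\Theta(k\log n)$ only by switching conventions (substituting $L=\Theta(k/\CAP)$ and absorbing $\kappa,\CAP$ into the $\Theta$). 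That still suffices for the theorem as stated, but the paper's $\rho=1$ route is both sharper and simpler: it avoids tracking a dispersion-type constant $\kappa$ and replaces your step~(ii) mutual-information estimate by a direct lower bound on $E_o(1,Q_{\beta,\MapN})$ via the truncation/quantization error terms in \eqref{eq:awgn.6}.
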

The proof involves a choice of parameters 
$\beta = \Theta(\sqrt{\log k})$
and $\MapN = \Theta(| \log \snr| + |\log \sigma_{\text{min}}| + \log
k)$.   These
parameters depend on $\sigma_{\text{min}}$, to bound the ``dynamic
range'' of the channel capacity.

\begin{proof}


  The highest rate at which the code can operate is $k$. Choose $k$
  large enough so that $k > \frac{1}{2} \log (1 + \snr_{\text{max}})$,
  where $\snr_{\text{max}} = P/\sigma_{\text{min}}^2$.  Now let $\beps
  = 1/k$ and assign the remaining parameters as in Claim
  \ref{claim:awgn.params} (in Appendix~\ref{app:awgn_error_prob};
  mirrors Lemma~\ref{cor:bsc}).

The proof of Claim \ref{claim:close_to_capacity} still holds with
$C = \CAP - 1/k$, so a rate $R$ such that 
$C - R  = \Theta\Big(\frac{C^2}{k}\Big)$
is achievable.
That is, $R = \CAP - \Theta(1/k)$.  
Theorem~\ref{thm:awgn} proceeds according to the same arguments as the
proof of Theorem~\ref{thm:bsc}, with Claim~\ref{claim:awgn.params}
replacing Lemma~\ref{cor:bsc}, to achieve (for large enough $k$) the
bound
\begin{align}\label{eq:awgn.10}
P_e & \leq 2^{-N (\CAP - 1/k - R)}~=~2^{- \Theta(N C^2/k)}.  
\end{align}
Subsequently, \eqref{eq:bsc-error} is replaced by
\begin{align}\label{eq:awgn-error}
P_e(i) & = 2^{- \Theta(i L C^2/k)}. 
\end{align}
That is, $i^*$ is chosen to be $\Theta\big(k C^{-2} \log n\big)$
rather than $\Theta\big(k^2 C^{-4} \kappa_p^{-1} \log n\big)$, and now
$B = 2^{i^* k} = n^{O(k^2)}$, rather than $n^{O(k^3)}$.  Finally,
$\nu$ is required to be $\Theta\big(k \log n \MapN\big)$ rather than
$\Theta\big(k^2 \log n\big)$.
\end{proof}

\section{Conclusion}\label{sec:conclusion}

We proved that spinal codes achieve Shannon capacity for the BSC and
AWGN channels with an efficient polynomial-time encoder and decoder;
they are the first rateless codes with these properties.  The key idea
in the spinal code is the application of a hash function
in a sequential manner over the message bits. The sequential structure
of the code turns out to be crucial for efficient decoding, while the
pair-wise independence of the hash function provides enough
pseudo-randomness to ensure that the code essentially achieves
capacity.

The key idea in the proof is an unusual application of a variant of
Gallager's famous result characterizing the error exponent of random
codes for any memoryless channel; the use of this result is
unconventional because the spinal code is not a traditional random
code.  Our work provides a methodical and effective way to
de-randomize Shannon's random codebook construction, and as such, applies
immediately to all discrete memoryless channels and will
likely generalize to other random coding arguments in Information
Theory.


\bibliographystyle{plain}
\bibliography{../rateless}

\begin{appendix}

\section{Proof of Lemma \ref{lem:col-1}}
\label{app:collision}

\begin{proof} To construct event $\cE$, consider the original message
(prefix) $\bm$ and any other message (prefix) $\bm'$ that differs from
$\bm$ in any of the first $k$ bits. Both these messages are of length
$i^* k$. Given $\bm$, denote all such message (prefixes) $\bm'$ as
$\cM'(\bm)$ (note that $|\cM'(\bm)| = 2^{i^*k - k}$).

At the end of $L$ passes, the codewords generated based on these
(prefix) messages are of length $N =i^* L$. Let them be $\bx$ and
$\bx'$ respectively.  We wish to evaluate the joint probability of
$\bx = \bb$ and $\bx = \bb'$ for any $\bb, \bb' \in \{0,1\}^N$
(effectively, we are assuming a re-indexing of the coded bits so that
the first $L$ coded bits depend on the first spine value, the next $L$
coded bits depends on the next spine value, and so on). Since the
messages $\bm$ and $\bm'$ differ in the first $k$ bits, by the
property of the hash function (Proposition
\ref{def:output_avalanche}), the $\nu$ bits of the first spine values
for the messages are i.i.d. uniform random bits. If the first spine
values of the two messages differ (i.e., no collision), which happens
with probability $1-2^{-\nu}$, the $\nu$ bits of the second spine
values for the two messages are i.i.d. uniform random bits, and so on.
Let $E_j$ be the event that the first $j$ spine values for both
messages are not the same (i.e., no collision amongst first $j$ spine
values). Then $\P(E_j | E_{j-1}) = 1 - 2^{-\nu}$. Therefore, for $j
\geq 1$ and since $E_{j} \subset E_{j-1}$,
\begin{align}
\P(E_j) & = \P(E_j \cap E_{j-1}) ~=~ \P(E_j | E_{j-1}) \P(E_{j-1})
\nonumber \\
           & = (1-2^{-\nu}) \P(E_{j-1}) \nonumber \\
           & = (1-2^{-\nu})^j. 
\end{align} 
Now, conditioned on $E_{j-1}$, the $L$ coded bits generated from the
\jth{} spine value in $\bx$ and $\bx'$ are i.i.d. and uniformly
distributed. Therefore, (with notation $\bx_{i,j} = (x_i, \dots,
x_j)$, etc.)
\begin{align}
\P(\bx = \bb, \bx' = \bb') & \geq \P(\bx = \bb, \bx' = \bb', E_1)
\nonumber \\
& = \P(\bx_{L+1,N} = \bb_{L+1,N}, \bx'_{L+1,N} = \bb'_{L+1,N} | E_1)
\P(\bx_{1, L} = \bb_{1, L}, \bx'_{1, L} = \bb'_{1, L}, E_1) \nonumber
\\
& \geq \P(\bx_{L+1,N} = \bb_{L+1,N}, \bx'_{L+1,N} = \bb'_{L+1,N} |
E_1) \Big(\P(\bx_{1, L} = \bb_{1, L}, \bx'_{1, L} = \bb'_{1, L}) -
\P(E_1^c)\Big) \nonumber \\
& = \P(\bx_{L+1,N} = \bb_{L+1,N}, \bx'_{L+1,N} = \bb'_{L+1,N} | E_1)
\Big(\P(\bx_{1, L} = \bb_{1, L}) \P(\bx'_{1, L} = \bb'_{1, L}) -  
\P(E_1^c)\Big) \nonumber \\
& = \P(\bx_{L+1,N} = \bb_{L+1,N}, \bx'_{L+1,N} = \bb'_{L+1,N} | E_1)
\Big( 2^{-2L} - 2^{-\nu}\Big) \nonumber \\
& = \P(\bx_{L+1,N} = \bb_{L+1,N}, \bx'_{L+1,N} = \bb'_{L+1,N} | E_1)
2^{-2L} (1-2^{-\nu + 2L}) \nonumber \\
& =  (1-2^{-\nu + 2L}) ~\P(\bx_{1, L} = \bb_{1, L}) \P(\bx'_{1, L} =
\bb'_{1, L}) ~\P(\bx_{L+1,N} = \bb_{L+1,N}, \bx'_{L+1,N} =
\bb'_{L+1,N} | E_1). \label{eq:rec}
\end{align}
Here, we have used the fact that the distribution of $\bx_{L+1, N},
\bx'_{L+1, N}$ is conditionally independent of $\bx_{1, L},
\bx'_{1,L}$ given $E_1$.  \eqref{eq:rec} then sets up a recursion,
leading to the following:
\begin{align}
\P(\bx = \bb, \bx' = \bb') & \geq \P(\bx = \bb) \P(\bx' = \bb')
\big(1-2^{-\nu + 2L}\big)^{i^*}. 
\end{align}
Given this, with respect to the underlying probability space,
$\Omega$, we can define an event $\cE_{\bm, \bm'}$ with
\begin{align}\label{eq:col-2}
\P(\cE_{\bm, \bm'}) & = \big(1-2^{-\nu + 2L}\big)^{i^*}, 
\end{align}
so that we have 
\begin{align}\label{eq:col-3}
\bone_{\{\bx = \bb, \bx' = \bb'\}}& =  \P(\bx = \bb) \P(\bx' = \bb')
\bone_{\{\cE_{\bm,\bm'}\}} + q(\bb, \bb')
\bone_{\{\cE_{\bm,\bm'}^c\}}, 
\end{align}
where $\bone_{\{E\}}(\cdot)$ is the indicator random variable of event
$E$ with $\bone_{\{E\}}(\omega) = 1$ if $\omega \in E$ and $0$
otherwise for $\omega \in \Omega$ and $q(\cdot, \cdot)$ represents the
conditional probability distribution of $\bx, \bx'$ given
$\cE^c$. Equivalently, what we have is an event $\cE_{\bm, \bm'}$ with
property \eqref{eq:col-2} such that
\begin{align}
\P(\bx = \bb, \bx' = \bb' | \cE_{\bm, \bm'}) & = \P(\bx = \bb) \P(\bx'
= \bb'). 
\end{align} 
Now define 
\begin{align}
\cE & = \cap_{\bm' \in \cM'(\bm)} \cE_{\bm, \bm'}. 
\end{align} 
Since $\cE \subset \cE_{\bm, \bm'}$ for any $\bm' \in \cM'(\bm)$ and
from \eqref{eq:col-3}, the conditional distribution of $\bx, \bx'$
with respect to $\cE_{\bm, \bm'}$ is uniform, it follows that, for any
$\bm' \in \cM'(\bm)$,
\begin{align}
\P(\bx = \bb, \bx' = \bb' | \cE) & = \P(\bx = \bb) \P(\bx' = \bb'). 
\end{align} 
Finally, by \eqref{eq:col-2} and union bound, it follows that 
\begin{align}
\P\big(\cE^c\big) & = O(i^* 2^{N + 2L - \nu})
\end{align}
Therefore, choosing
\begin{align}\label{eq:nu}
\nu & =  (N + 2L + \log i^*) + 6 \log n  ~=~\Theta\big( k^2 \log n
\big), 
\end{align}
with an appropriately large constantleads to
\begin{align}
\P\big(\cE^c\big) & = O\big(1/n^6\big),
\end{align}
as desired, completing the proof of Lemma \ref{lem:col-1}. 
\end{proof}

\section{Error probability of random codes over
AWGN}\label{app:awgn_error_prob}
As in \S\ref{ssec:1.5}, we consider random codes with only pairwise
independent codewords across messages.
\begin{property}[{\em Pairwise independent random code for AWGN with
distribution $Q$}]\label{prop:awgn}
A code that maps every $n$ bit message $\bm \in \{0,1\}^n$ to a
random codeword of $T$ real numbers, $\bx(\bm)$, so that (i) for a
given $\bm$, $x_1(\bm),\dots, x_T(\bm)$ are i.i.d. with distribution
$Q$, (ii) for any $\bm \neq \bm'$, $\bx(\bm)$ and $\bx(\bm')$ are
independent of each other, and (iii) the joint distribution of all
codewords is symmetric. 
\end{property}

This definition allows us to state the following variant of Gallager's
error-exponent result \cite[Theorem 7.3.2]{Gallager68} for a random
code on the AWGN channel.  The coded symbols have distribution $Q$
over a finite set $\Omega \subset \mathbb R$.

\begin{lemma}\label{lem:awgn}
Consider an AWGN channel with noise variance $\sigma^2$ and a
pairwise independent random code for AWGN with distribution $Q$,
message length $n$, code length $T$, and rate $R = n/T < C$.
Then the probability of error under ML decoding is bounded by
\begin{align}
P_e & \leq 2^{-T (E_o(Q) - R)}, \quad \text{where} \quad E_o(Q) =
-\log \Bigg\{ \frac{1}{\sqrt{2\pi \sigma^2}} \int_{\R} \Big[\sum_{j
\in \Omega} Q(j) \exp\Big(-\frac{(y - j)^2}{4 \sigma^2}\Big)\Big]^2 dy
\Bigg\}.
\end{align}
\end{lemma}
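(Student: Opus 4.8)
The plan is to mirror the proof of the BSC variant, Lemma~\ref{lem:bsc} (Appendix~\ref{app:variant}): run Gallager's random-coding argument specialized to the exponent parameter $\rho=1$, where a plain union bound over competing codewords replaces the Jensen step that, in the general $\rho<1$ case, forces full mutual independence. Fix the transmitted message $\bm$, let $\bx=\bx(\bm)$ be its random codeword, and let $\by=\bx+\bz$ with $\bz$ having i.i.d.\ $N(0,\sigma^2)$ coordinates, so the ML likelihood of a candidate $\bm''$ is $P(\by\mid\bx(\bm''))=\prod_{t=1}^T\frac{1}{\sqrt{2\pi\sigma^2}}\exp\!\big(-(y_t-x_t(\bm''))^2/(2\sigma^2)\big)$. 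An ML error occurs only if some $\bm'\neq\bm$ has $P(\by\mid\bx(\bm'))\geq P(\by\mid\bx(\bm))$, so by the union bound and the elementary inequality $\ind{a\geq b}\leq\sqrt{a/b}$ for $a,b>0$,
\begin{align}
P_e \;\leq\; \sum_{\bm'\neq\bm}\E\!\left[\sqrt{\frac{P(\by\mid\bx(\bm'))}{P(\by\mid\bx(\bm))}}\,\right],
\end{align}
the expectation being over the random codebook and the channel noise; here clause (iii) of Property~\ref{prop:awgn} lets us take $\bm$ fixed while still bounding the message-averaged $P_e$ (and in fact each summand is the same regardless of $\bm,\bm'$ by pairwise independence).

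The second step evaluates each summand. Because the channel is memoryless and, by clauses (i)--(ii) of Property~\ref{prop:awgn}, $\bx(\bm')$ is independent of the pair $(\bx(\bm),\by)$ with all symbols i.i.d.\ $\sim Q$, the expectation factorizes across the $T$ coordinates into $\big(\E_{X,X',Z}[\,\sqrt{\phi_\sigma(X+Z-X')/\phi_\sigma(Z)}\,]\big)^T$, where $X,X'\sim Q$ independently, $Z\sim N(0,\sigma^2)$, and $\phi_\sigma$ is the $N(0,\sigma^2)$ density. Integrating out $X'$, then substituting $y=X+Z$ and integrating out $X$ and $Z$, collapses the per-coordinate term to $\int_{\R}\big[\sum_{j\in\Omega}Q(j)\sqrt{\phi_\sigma(y-j)}\,\big]^2\,dy=\frac{1}{\sqrt{2\pi\sigma^2}}\int_{\R}\big[\sum_j Q(j)\exp\!\big(-(y-j)^2/(4\sigma^2)\big)\big]^2\,dy=2^{-E_o(Q)}$, which is exactly the quantity defining $E_o(Q)$; finiteness of $\Omega$ makes this well-defined and finite. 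Summing over the at most $2^n$ competing messages then gives $P_e\leq 2^n\cdot 2^{-T E_o(Q)}=2^{-T(E_o(Q)-R)}$ after substituting $R=n/T$, which is the claim.

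The main obstacle --- really the only non-mechanical point --- is justifying that pairwise independence of the codewords suffices where Gallager assumes full mutual independence, and this is precisely what pins us to $\rho=1$. In Gallager's derivation one raises the inner sum over the $2^n-1$ competitors to the power $\rho$ and applies Jensen, a step that for $\rho<1$ genuinely couples all competitors and needs their joint law; at $\rho=1$ the power is trivial, the union bound handles competitors one at a time, and only the law of a single pair $(\bx(\bm),\bx(\bm'))$ is ever invoked --- and under Property~\ref{prop:awgn} that law is simply a product of i.i.d.\ $Q$'s. Everything else is routine: the Chernoff-type inequality $\ind{a\geq b}\leq\sqrt{a/b}$, the coordinate-wise factorization (memorylessness plus pairwise independence), and the Gaussian integral; the computation parallels the Bhattacharyya-sum calculation in the BSC case of Appendix~\ref{app:variant}, with the discrete sum replaced by its Gaussian-integral analogue and the rate loss from the union over $2^n$ messages absorbed exactly as $-R$ in the exponent.
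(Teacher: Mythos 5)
Your proof is correct, but it takes a different route from the paper's. The paper does not prove Lemma~\ref{lem:awgn} by a standalone AWGN argument: it proves the general Theorem~\ref{thm:gallager} in Appendix~\ref{app:variant} for every $0<\rho\leq 1$ and every memoryless channel, and then obtains Lemma~\ref{lem:awgn} as the $\rho=1$ specialization (with the discrete output sum replaced by the Gaussian integral, citing Gallager's Chapter~7 computation). Your argument instead fixes $\rho=1$ from the outset: union bound over competitors, the Bhattacharyya-type inequality $\mathbf{1}_{[a\geq b]}\leq\sqrt{a/b}$, coordinate-wise factorization using clauses (i)--(ii) of Property~\ref{prop:awgn}, and an explicit evaluation of $\int_{\R}\big[\sum_{j\in\Omega}Q(j)\sqrt{\phi_\sigma(y-j)}\big]^2dy=2^{-E_o(Q)}$. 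This is more elementary and self-contained, and it handles the continuous output alphabet directly rather than by appeal to Gallager's specialization; what it gives up is generality, since it cannot produce the $\rho<1$ exponents of Theorem~\ref{thm:gallager}. One side remark of yours is inaccurate, though it does not affect the validity of your proof of the lemma as stated: pairwise independence does \emph{not} force $\rho=1$. In the paper's proof the Jensen step is applied to the \emph{conditional} expectation given $\bx(\bm)$, which needs only concavity of $x^\rho$; the sum over competitors is then handled by linearity, and pairwise independence enters only to replace $\E\big[\P(\by\mid\bx(\bm'))^{\frac{1}{1+\rho}}\,\big|\,\bx(\bm)\big]$ by its unconditional value. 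So the full $0<\rho\leq 1$ family of bounds survives under pairwise independence; your $\rho=1$ case is simply the one needed for the stated $E_o(Q)$.
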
 
Next, we want to specialize this bound for the spinal code symbol
distribution described in \S\ref{s:awgn}. Given
$\MapN$, $\beta$ and $P$, let
\[ 
\Omega = \Big\{\Phi^{-1}\!\!\left(\gamma + (1 -
2\gamma)u\right)\!\sqrt{P}: ~\gamma = \Phi(-\beta),~u = \frac{b +
1/2}{2^{\MapN}},~b \in \{0,\dots, 2^\MapN-1\} \Big\}. 
\]
where $\Phi$ is the CDF of the standard Gaussian.  By construction,
$|\Omega| = 2^\MapN$ and $\Omega \subset [-\beta \sqrt{P}, \beta
\sqrt{P}]$.  The distribution over $b$ is uniform, as in the case of
the spinal encoder, and hence each $j\in\Omega$ is equally likely
with probability $2^{-\MapN}$.  This leads to the following result.

\begin{lemma}\label{cor:awgn}
For the channel and code of Lemma~\ref{lem:awgn} with uniform
distribution over $\Omega$ (with parameters $\beta, \MapN, P$), the
probability of error under ML decoding is bounded above as 
\begin{align}\label{eq:awgn.6}
P_e & \leq 2^{-T (E' - R)}, \quad \text{where} \quad E' = \max_{\zeta > 1} \left(\frac{1}{2} \log \Big(1 + \frac{P}{\sigma^2
(1+1/\zeta)^2}\Big) - \frac{2 r(\beta)}{\ln 2} - \frac{(2\zeta + 1)
\Delta^2 \log e}{4 \sigma^2}\right).
\end{align}
\end{lemma}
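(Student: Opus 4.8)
The plan is to obtain Lemma~\ref{cor:awgn} by feeding Lemma~\ref{lem:awgn} the specific input distribution $Q=$ uniform over $\Omega$ and lower-bounding the resulting exponent $E_o(Q)$, comparing it with the exponent of an idealized i.i.d.\ Gaussian-input code of power $P$. First I would rewrite $E_o(Q)$ in \emph{pairwise} form: expanding the square in the definition and completing the square in $y$ gives the elementary identity $\frac{1}{\sqrt{2\pi\sigma^2}}\int_\R e^{-[(y-a)^2+(y-b)^2]/(4\sigma^2)}\,dy = e^{-(a-b)^2/(8\sigma^2)}$, so $E_o(Q) = -\log \E_{J,J'}\big[e^{-(J-J')^2/(8\sigma^2)}\big]$ with $J,J'$ i.i.d.\ $\sim Q$ (the Bhattacharyya coefficient of two Gaussians). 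If one wants the bound to hold at rates \emph{arbitrarily} close to $\CAP(P)$, one keeps Gallager's free exponent $\rho$ rather than fixing $\rho=1$; the same manipulation then reduces $E_0(\rho,Q)$ to a $(1+\rho)$-fold Gaussian integral, and the parameter $\zeta>1$ in the statement ends up playing the role of $1/\rho$.

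Next I would pass from the quantized law $Q$ to the underlying truncated Gaussian $\tilde Q$ (the law of $\Phi^{-1}(U)\sqrt P$, $U\sim\mathrm{Unif}(\gamma,1-\gamma)$, $\gamma=\Phi(-\beta)$). Each $j\in\Omega$ represents an equiprobable cell of $\tilde Q$; since $(\Phi^{-1})'$ is largest near the ends of $(\gamma,1-\gamma)$, where the Gaussian density is flattest, the representative lies within $\Delta=\Theta\!\big(\sqrt P\,e^{\beta^2/2}\,2^{-\MapN}\big)$ of every point of its cell. Writing $J=V+(\text{error})$ with $V\sim\tilde Q$ and per-coordinate error $\le\Delta$, I must lower-bound $(J-J')^2$; since $t\mapsto e^{-t^2/(8\sigma^2)}$ is \emph{not} uniformly Lipschitz I cannot simply perturb, so I use the split $(a\pm\delta)^2\ge(1-\tfrac1\zeta)a^2-(\zeta-1)\delta^2$ (valid for $\zeta>1$), which gives $\E_{J,J'}\big[e^{-(J-J')^2/(8\sigma^2)}\big]\le e^{\Theta(\zeta\Delta^2/\sigma^2)}\,\E_{V,V'}\big[e^{-(1-1/\zeta)(V-V')^2/(8\sigma^2)}\big]$. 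Bounding the truncated density by $(1-2\gamma)^{-1}$ times the full $\mathcal N(0,P)$ density costs a factor $(1-2\gamma)^{-2}=e^{\Theta(\gamma)}$, which I fold into the truncation term $r(\beta)$; the remaining integral is purely Gaussian ($W-W'\sim\mathcal N(0,2P)$, $\E[e^{-\alpha(W-W')^2}]=(1+4P\alpha)^{-1/2}$), and its $-\log$ is, once the $\rho$-freedom is retained, the stated $\tfrac12\log\!\big(1+\tfrac{P}{\sigma^2(1+1/\zeta)^2}\big)$ --- i.e.\ the effective $\snr$ degrades from $P/\sigma^2$ only by the robustness factor $(1+1/\zeta)^{-2}$. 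Collecting the three contributions and taking $\max_{\zeta>1}$ gives $E'$, and Lemma~\ref{lem:awgn} yields $P_e\le 2^{-T(E'-R)}$.

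The main obstacle is precisely this quantization step: because the Bhattacharyya kernel has unbounded log-derivative, the $\Delta$-sized perturbation can only be absorbed through the $\zeta$-split, which simultaneously shrinks the main term and creates the $(2\zeta+1)\Delta^2$-penalty, so one must verify that the $\zeta$ optimizing this trade-off still leaves the main term within $O(1/k)$ of $\CAP(P)$; this is what forces $\Delta$ (hence $2^{-\MapN}$) to be polynomially small and thereby dictates $\nu=\Theta(k^2\MapN\log n)$ along with the choices $\beta=\Theta(\sqrt{\log k})$, $\MapN=\Theta(|\log\snr|+|\log\sigma_{\text{min}}|+\log k)$ used in Theorem~\ref{thm:awgn}. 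Two routine points remain: the $J=J'$ diagonal survives the split (the two symbols then share a cell, so the perturbation is $O(\Delta)$), and since $\beta\to\infty$ and $\MapN\to\infty$ drive $r(\beta),\Delta\to0$, the bound correctly degenerates to the ideal Gaussian exponent.
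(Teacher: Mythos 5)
Your overall plan parallels the paper's: specialize Lemma~\ref{lem:awgn} to the uniform distribution on $\Omega$, pass from the quantized symbols to the underlying truncated Gaussian with a per-symbol perturbation of size at most $\Delta$, absorb that perturbation through a $\zeta$-parametrized split (your inequality $(a\pm\delta)^2\ge(1-\tfrac1\zeta)a^2-(\zeta-1)\delta^2$ does the same work as the paper's case distinction $|y-x|\lessgtr\zeta|\delta(x)|$, producing the same two artifacts: a shrunken main term and a $\Theta(\zeta\Delta^2/\sigma^2)$ penalty), charge truncation to $r(\beta)$, and finish with a closed-form Gaussian integral. Your Bhattacharyya reformulation $E_o(Q)=-\log \E_{J,J'}\big[e^{-(J-J')^2/(8\sigma^2)}\big]$ is a cleaner starting point than the paper's double-integral manipulation, and the quantization and truncation bookkeeping is sound.

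The gap is at the final evaluation, i.e.\ exactly the step the paper fills by citing \cite[Eq.~(7.4.21)]{Gallager68}. Lemma~\ref{lem:awgn} is the $\rho=1$ Gallager bound, and the stated form $2^{-T(E'-R)}$ (coefficient $1$ on $R$) forces $\rho=1$; carried out at $\rho=1$, your own Gaussian computation yields the main term $\tfrac12\log\big(1+\tfrac{(1-1/\zeta)P}{2\sigma^2}\big)$ --- the cutoff-rate form --- not the claimed $\tfrac12\log\big(1+\tfrac{P}{\sigma^2(1+1/\zeta)^2}\big)$. Your proposed fix, ``keep $\rho$ free and let $\zeta$ play the role of $1/\rho$,'' is asserted rather than executed: with general $\rho$ the bound reads $2^{-T(E_o(\rho,Q)-\rho R)}$, so the $R$-coefficient no longer matches the statement, and in the statement $\zeta$ already has a different job (it is the quantization-split parameter multiplying the $\Delta^2$ penalty), so it cannot double as $1/\rho$ unless you actually redo the $(1+\rho)$-fold integral with the perturbation and show the result collapses to the claimed $E'$ --- which you do not do. The paper, by contrast, stays at $\rho=1$, uses $\zeta$ only for the split, and gets the capacity-like main term wholesale from Gallager's evaluation of the resulting double Gaussian integral. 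So as written your argument proves a weaker (cutoff-rate-type) bound, not the lemma as stated; to close it you must either supply the $(1+\rho)$-fold computation and reconcile the $\rho R$ term, or justify the evaluation the paper imports by citation. (Your cutoff-rate observation is in fact a substantive tension with the stated constant, and is worth flagging rather than hiding inside an unproved $\zeta\leftrightarrow 1/\rho$ identification.)
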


\begin{proof}
Lemma \ref{lem:awgn} indicates that a random code generated with this
distribution would have error probability
\begin{align}\label{eq:awgn.1}
P_e & \leq 2^{-T (E - R)}, \quad \text{where} \quad E = -\log \Big\{
\frac{1}{\sqrt{2\pi \sigma^2}}\int_\R \Big[\sum_{j \in \Omega}
2^{-\MapN} \exp\Big(-\frac{(y - j)^2}{4 \sigma^2}\Big)\Big]^2 dy
\Big\}.  
\end{align}

The expression above is explicit but opaque. We can simplify it by
rewriting the summation over discrete $j\in\Omega$ as an integral over
$x\in[-\beta\sqrt{P},\beta\sqrt{P}]$ with Gaussian density, provided
that we construct a suitable function $\delta(x)$ so that
$j=x+\delta(x)$ is distributed according to $\Omega$.  Extracting
$\delta(x)$ from the resulting integrand will yield a tractable
expression.

Using the mean value theorem and the properties of the Gaussian, the
separation between two adjacent elements in $\Omega$ can be bounded
above by 
\[ 
\Delta \equiv \Delta(\beta, \MapN, P) = \frac{\beta \sqrt{P}
\exp(\beta^2/2)}{2^{\MapN-1}}. 
\]

Now consider the following thought experiment.  First, sample a
Gaussian variable with mean $0$ and variance $P$. If the outcome is
within $[-\beta\sqrt{P}, \beta \sqrt{P}]$, map it to a nearby value in
$\Omega$ so that the induced distribution over elements of $\Omega$ is
uniform (equiprobable quantization); if the outcome is not within
$[-\beta\sqrt{P}, \beta \sqrt{P}]$, reject it (truncation). The
rejection probability, $r(\beta)$, is $2(1-\Phi(\beta)) =
O(\frac{1}{\beta}\exp(-\beta^2/2))$. We can relate the quantized value $j$ to the
sampled Gaussian value $x$ by an additive discretization error
$\delta(x)=j-x$.  From these properties, it follows that the discrete
summation involving probabilities $2^{-\MapN}$ over $\Omega$ in
\eqref{eq:awgn.1} can be replaced by a Riemann integral over the
Gaussian density with mean $0$ and variance $P$, normalized by
$1/(1-r(\beta))$, and limited to the range $[-\beta \sqrt{P},
\beta\sqrt{P}]$:
\begin{align}\label{eq:awgn.2}
E & = -\log \Big\{ \frac{1}{\sqrt{8\pi^3 P^2 \sigma^2}} \int_\R
\Big[\int_{-\beta \sqrt{P}}^{\beta \sqrt{P}} (1-r(\beta))^{-1}
\exp\Big(-\frac{x^2}{2P} \Big) \exp\Big(-\frac{(y - x -
\delta(x))^2}{4 \sigma^2}\Big) dx \Big]^2 dy \Big\}, 
\end{align}
By construction, $|\delta(x)| \leq \Delta$. We can pull $\delta(x)$
out of the integrand by placing a multiplicative bound on
$\exp(-(y-x-\delta(x))^2/4\sigma^2)$ in terms of
$\exp(-(y-x)^2/4\sigma^2)$ and a small error term involving $\Delta$.
Let $\zeta > 1$ be a large constant.  Then if $|y-x| \leq \zeta
|\delta(x)|$,
\begin{align}\label{eq:awgn.3}
\exp\Big(-\frac{(y-x-\delta(x))^2}{4\sigma^2}\Big) & \leq
\exp\Big(-\frac{(y-x)^2}{4\sigma^2}\Big) \exp\Big(\frac{(2\zeta +1)
\Delta^2}{4\sigma^2}\Big).  
\end{align}
Otherwise, $|y-x| \geq \zeta |\delta(x)|$ and hence 
\begin{align}\label{eq:awgn.4}
\exp\Big(-\frac{(y-x-\delta(x))^2}{4\sigma^2}\Big) & \leq
\exp\Big(-\frac{(y-x)^2}{4\sigma^2 (1+1/\zeta)^2 }\Big). 
\end{align}
From \eqref{eq:awgn.2}-\eqref{eq:awgn.4}, it follows that (using
approximation $\log (1-x) \approx - x/\ln 2$ and treating $r(\beta)$
small or equivalently $\beta$ large), 
\begin{align}\label{eq:awgn.5}
E & \geq - \frac{2 r(\beta)}{\ln 2} - \frac{(2\zeta + 1) \Delta^2 \log
e}{4 \sigma^2} 
-\log \Big\{\frac{1}{\sqrt{8\pi^3 P^2 \sigma^2}}   \int_\R
\Big[\int_{\mathbb R}  \exp\Big(-\frac{x^2}{2P} \Big)
\exp\Big(-\frac{(y - x)^2}{4 \sigma^2 (1+1/\zeta)^2}\Big) dx \Big]^2
dy \Big\}. 
\end{align}
As established in \cite[Eq (7.4.21)]{Gallager68}, 
\begin{align}\label{eq:awgn.5a}
& -\log \Big\{\frac{1}{\sqrt{8\pi^3 P^2 \sigma^2 (1-1/\zeta)^2}}  
\int_\R \Big[\int_{\mathbb R}  \exp\Big(-\frac{x^2}{2P} \Big)
\exp\Big(-\frac{(y - x)^2}{4 \sigma^2 (1+1/\zeta)^2}\Big) dx \Big]^2
dy \Big\} \nonumber \\
& \qquad = \frac{1}{2} \log \Big(1 + \frac{P}{\sigma^2
(1+1/\zeta)^2}\Big). 
\end{align} 
Combining \eqref{eq:awgn.5} and \eqref{eq:awgn.5a}, we obtain the
desired result. 
\end{proof}

\begin{claim}\label{claim:awgn.params}
With an appropriate choice of parameters, for a pairwise independent
random code over AWGN with distribution $Q$, the probability of error
for a rate $R < \CAP - \beps$ is bounded as
\begin{align}\label{eq:awgn.9}
P_e & \leq 2^{-T (\CAP - \beps - R)}
\end{align}
\end{claim}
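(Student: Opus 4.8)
The plan is to obtain Claim~\ref{claim:awgn.params} as a corollary of Lemma~\ref{cor:awgn} by choosing the free parameters so that the achievable exponent $E'$ in~\eqref{eq:awgn.6} is at least $\CAP(P) - \beps$; the claimed bound $P_e \le 2^{-T(\CAP(P) - \beps - R)}$ then follows immediately from $P_e \le 2^{-T(E'-R)}$ for any $R < \CAP(P)-\beps$. Write $E' = \tfrac12\log\!\big(1 + P/(\sigma^2(1+1/\zeta)^2)\big) - \tfrac{2r(\beta)}{\ln 2} - \tfrac{(2\zeta+1)\Delta^2\log e}{4\sigma^2}$, where $\zeta>1$ is ours to pick, $r(\beta) = O(\tfrac1\beta e^{-\beta^2/2})$ is the truncation loss, and $\Delta = \beta\sqrt{P}\,e^{\beta^2/2}/2^{\MapN-1}$ is the quantization step. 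Comparing with $\CAP(P) = \tfrac12\log(1+P/\sigma^2)$, the deficit $\CAP(P) - E'$ splits into three nonnegative pieces — a capacity gap from replacing $\snr$ by $\snr/(1+1/\zeta)^2$, a truncation penalty, and a quantization penalty — and I would bound each of them by $\beps/3$.

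For the first piece, using $(1+1/\zeta)^{-2} \ge 1 - 2/\zeta$ and $\log(1+a) \le \log(1+b) + (a-b)\log e$, the gap is $O\big((1/\zeta)\cdot\snr/(1+\snr)\big) = O(1/\zeta)$ uniformly in $\snr$, so $\zeta = \Theta(1/\beps)$ (with a suitable absolute constant) makes it at most $\beps/3$. For the second, $r(\beta) = O(\beps)$ as soon as $e^{-\beta^2/2} = O(\beps)$, i.e. $\beta = \Theta(\sqrt{\log(1/\beps)}) = \Theta(\sqrt{\log k})$ (recall $\beps = 1/k$), giving a penalty at most $\beps/3$. The third piece, which is where the parameters interlock, equals
\[
\frac{(2\zeta+1)\Delta^2\log e}{4\sigma^2} = \Theta\!\left(\frac{\zeta\,\beta^2\,e^{\beta^2}}{2^{2\MapN}}\cdot\frac{P}{\sigma^2}\right);
\]
with $\zeta = \Theta(1/\beps)$ and $\beta^2 = \Theta(\log(1/\beps))$, the factor $\zeta\beta^2 e^{\beta^2}$ is only $\mathrm{poly}(1/\beps)$, so it suffices to take $\MapN$ large enough that $2^{2\MapN}$ exceeds $\mathrm{poly}(1/\beps)\cdot(P/\sigma^2)/\beps$, i.e. $\MapN = \Theta(\log(1/\beps) + |\log\snr|)$; since $\sigma^2 \ge \sigma_{\min}^2$ this is subsumed by $\MapN = \Theta(\log k + |\log\snr| + |\log\sigma_{\min}|)$, matching the scaling quoted under Theorem~\ref{thm:awgn}. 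Summing the three bounds, $E' \ge \CAP(P) - \beps$, which completes the proof.

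The main obstacle is arranging this three-way parameter tension to close simultaneously: shrinking $\Delta$ demands a large $\MapN$, but $\MapN$ must stay logarithmic in $1/\beps$ so that the alphabet size $2^{\MapN}$ and the spine width $\nu = \Theta(k^2\MapN\log n)$ remain affordable for the decoder; shrinking the truncation loss demands a large $\beta$, but $\Delta$ carries a factor $e^{\beta^2/2}$, so $\beta$ can grow only like $\sqrt{\log(1/\beps)}$; and $\zeta$ must be large to kill the capacity gap yet appears with a coefficient $2\zeta+1$ in the quantization penalty. Everything closes because the quantization penalty depends on $\MapN$ exponentially, so a merely logarithmic increase in $\MapN$ absorbs the $\mathrm{poly}(1/\beps)$ blow-ups contributed by $\zeta$ and $e^{\beta^2}$. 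A secondary subtlety worth flagging is uniformity over the channel: $\sigma^2$ is only known to lie in $[\sigma_{\min}^2,\infty)$, and since the quantization penalty is largest (and capacity is evaluated) at the smallest admissible noise level, one designs with $\sigma_{\min}$ in place of $\sigma$ throughout — which is precisely why $|\log\sigma_{\min}|$ enters the required $\MapN$.
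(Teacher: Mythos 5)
Your proposal is correct and takes essentially the same approach as the paper: it specializes Lemma~\ref{cor:awgn} and picks $\zeta$, $\beta$, and $\MapN$ so that each of the three loss terms in $E'$ (capacity gap, truncation, quantization) is at most a constant fraction of $\beps$, giving $E' \ge \CAP - \beps$ and hence \eqref{eq:awgn.9}. The only immaterial difference is that the paper sets $\zeta = 9\snr/\beps$ whereas you take $\zeta = \Theta(1/\beps)$ via the $\snr/(1+\snr)$ refinement of the capacity-gap bound; both yield the same scalings $\beta = \Theta(\sqrt{\log(1/\beps)})$ and $\MapN = \Theta(\log(1/\beps) + |\log \snr| + |\log \sigma_{\text{min}}|)$.
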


\begin{proof}
For a given small enough $\beps > 0$, select 
\begin{align}\label{eq:awgn.7}
& \zeta  = 9\snr/\beps, \quad \text{with} \quad \snr = P/\sigma^2,
\nonumber \\
& \beta \quad \text{large enough so that} \quad r(\beta) = \frac{\beps
\ln 2}{6}\quad \text{where recall}\quad r(\beta) = 2(1-\Phi(\beta)),
\nonumber \\
& \MapN \quad \text{large enough so that} \quad \Delta = \frac{2\beps
\sigma^2}{9 \sqrt{P}} \quad \text{where recall} \quad \Delta =
\frac{\beta \sqrt{P} \exp(\beta^2/2)}{2^{\MapN-1}}.
\end{align}
This selection leads to $\beta = \Theta(\sqrt{\log 1/\beps})$ and
$\MapN = \Theta(\log \snr_{\text{max}} + \log \sigma_{\text{min}} +
\log 1/\beps )$ with $\snr_{\text{max}} = P/\sigma_{\text{min}}^2$.
Now, with these choices of parameters and using the fact that
$\frac{1}{2}\log (1+x)$ is a $1$-Lipschitz function, we obtain from
\eqref{eq:awgn.6} that  
\begin{align}\label{eq:awgn.8}
E' & \geq \CAP - \beps. 
\end{align}
\end{proof}

\section{Variation of Gallager's result: Pairwise independent random
code and discrete memoryless channel} 
\label{app:variant}

Here we present a derivation of a variation of Gallager's result about
the error exponent (or error probability) for a random code under the
ML decoding rule for any discrete memoryless channel. The variation
assumes the pairwise independence property of random codewords rather
than complete independence. Effectively, we observe that the proof
technique of Gallager \cite{Gsimple, Gallager68} requires only
pairwise independence.  Since results identical to Lemmas
\ref{lem:bsc} and \ref{lem:awgn} were derived by specializing them for
the BSC and AWGN channel respectively (see \cite[Chapters 5,
7]{Gallager68}), the justification of these two Lemmas follow.


\paragraph{Pairwise independent random code.} Consider $n$-bit
messages in $\{0,1\}^n$. Let $Q$ be distribution over $\cI$. Under
a pairwise random code, using $Q$, of rate $R = n/T$, each message $\bm
\in \{0,1\}^n$ is mapped to a random codeword $\bx(\bm) \in \cI^T$
such that
\newcommand{\bi}{\mathbf i}
\begin{itemize}
\item[(a)] For any $\bm \in \{0,1\}^n$ and $\bi = (i_1,\dots, i_T) \in \cI^T$, 
\begin{align}
\P(\bx(\bm) = \bi) & = \prod_{t=1}^T Q(i_t).  
\end{align}
\item[(b)] For any $\bm \neq \bm' \in \{0,1\}^n$ and $\bi, \bi' \in \cI^T$, 
\begin{align}
\P(\bx(\bm) = \bi, \bx(\bm') = \bi') & = \P(\bx(\bm) = \bi) \times \P(\bx(\bm') = \bi')
\end{align}

\item[(c)] The joint distribution of all codewords is symmetric. 

\end{itemize}

\paragraph{Maximum likelihood decoding.} To transmit message $\bm$,
the codeword $\bx(\bm)$ is sent over the channel, producing output
$\by$.
The ML rule produces an estimate $\hat{\bm}$ so that
\begin{align}\label{apx:ml}
\P(\by | \bx(\hat{\bm})) & = \max_{\bm' \in \{0,1\}^n} \P(\by | \bx(\bm')). 
\end{align}
A decoding error occurs if $\hat{\bm} \neq \bm$.

\paragraph{Probability of error.} Let $P_{e\bm}$ denote the
probability of decoding error when $\bm$ was transmitted.  $P_{e\bm}$
is average of probability of error over all randomly chosen codes. As
before, the overall probability of error is
\begin{align}
P_e & = \frac{1}{2^n} \Big(\sum_{\bm \in \{0,1\}^n} P_{e\bm}\Big).
\end{align}
Due to symmetry in the random code, $P_{e\bm}$, the average
probability of error over all choices of codes, is the same for all
$\bm$. Therefore, $P_e$ equals $P_{e\bm}$ for any given $\bm$.

\begin{theorem}\label{thm:gallager}
Given the above setup, for any $\bm \in \{0,1\}^n$, 
\begin{align}
P_{e \bm} & \leq 2^{-T (-\rho R + E_o(\rho, Q))}, 
\end{align}
for any $0 < \rho \leq 1$ with 
\begin{align}
E_o(\rho, Q) & = - \log \Big[\sum_{j \in \cO} \Big(\sum_{i \in \cI} Q(i) P_{ij}^{\frac{1}{1+\rho}}\Big)^{1+\rho}\Big].
\end{align}
\end{theorem}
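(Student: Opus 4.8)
The plan is to follow Gallager's classical argument for the random coding bound (cf. \cite[Section 5.6]{Gallager68}), while being careful to use only the pairwise independence hypothesis (b) rather than full mutual independence. Fix the transmitted message $\bm$ and the channel output $\by$. Under the ML rule \eqref{apx:ml}, a decoding error can occur only if some competing message $\bm' \neq \bm$ satisfies $\P(\by \mid \bx(\bm')) \geq \P(\by \mid \bx(\bm))$. The first step is to bound the conditional error probability, given $\bx(\bm) = \bx$ and $\by = \by$, by a union bound over competitors followed by Gallager's $\rho$-trick: for any $0 < \rho \leq 1$,
\begin{align}
\P(\text{error} \mid \bx, \by) & \leq \Bigg( \sum_{\bm' \neq \bm} \P\Big( \P(\by \mid \bx(\bm')) \geq \P(\by \mid \bx) \,\Big|\, \bx, \by \Big) \Bigg)^{\rho}.
\end{align}
This uses only that $\min(1, Z)^{\rho} \geq \min(1, Z^{\rho})$ and that the indicator of the error event is at most the $\rho$-th power of the count of "confusable" competitors; no independence is needed yet.

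The second step estimates each term $\P(\P(\by \mid \bx(\bm')) \geq \P(\by \mid \bx) \mid \bx, \by)$ by a Markov/Chernoff-style bound: the event $\{\P(\by \mid \bx(\bm')) \geq \P(\by \mid \bx)\}$ is contained in $\{ \P(\by \mid \bx(\bm'))^{1/(1+\rho)} \geq \P(\by \mid \bx)^{1/(1+\rho)} \}$, so its probability is at most $\P(\by \mid \bx)^{-1/(1+\rho)} \, \E\big[ \P(\by \mid \bx(\bm'))^{1/(1+\rho)} \big]$. Here is the one place the independence structure enters, and this is where hypothesis (b) — pairwise independence — does exactly the job of full independence: the expectation $\E[\P(\by \mid \bx(\bm'))^{1/(1+\rho)}]$ involves only the single codeword $\bx(\bm')$, whose marginal law by (a) is the product distribution $\prod_t Q(i_t)$, and this marginal is unaffected by whether $\bx(\bm')$ is independent of $\bx(\bm)$ or merely pairwise-independent of it. So $\E[\P(\by \mid \bx(\bm'))^{1/(1+\rho)}] = \prod_{t=1}^T \big( \sum_{i} Q(i) P_{i, y_t}^{1/(1+\rho)} \big)$, and there are fewer than $2^n$ competitors. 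Substituting back, taking the $\rho$-th power, and then averaging over $\bx$ (with law $\prod_t Q$) and over $\by$ (with the channel law given $\bx$), the per-coordinate structure factorizes the $T$-fold product, yielding
\begin{align}
P_{e\bm} & \leq 2^{n\rho} \prod_{t=1}^{T} \sum_{j \in \cO} \Big( \sum_{i \in \cI} Q(i) P_{ij}^{\frac{1}{1+\rho}} \Big)^{1+\rho} = 2^{n\rho} \cdot 2^{-T E_o(\rho, Q)} = 2^{-T(-\rho R + E_o(\rho, Q))},
\end{align}
using $R = n/T$ and the definition of $E_o(\rho,Q)$; the symmetry hypothesis (c) guarantees $P_{e\bm}$ is the same for all $\bm$, so bounding it for one fixed $\bm$ suffices.

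I expect the main obstacle to be purely expository rather than mathematical: making precise, at the step where the union bound over competitors is combined with the $\rho$-power inequality, that the only stochastic quantities that get coupled are pairs $(\bx(\bm), \bx(\bm'))$ — never triples or larger tuples — so that hypothesis (b) genuinely suffices. Concretely, after the union bound each summand is a function of $\bx(\bm')$ alone (conditioned on $\bx, \by$, which are the data for the transmitted codeword), and the outer $\rho$-power is applied to a \emph{sum} of such summands, not to a product, so no joint law of several competitors is ever invoked. Once that is spelled out, the remaining manipulations — the Chernoff bound with exponent $1/(1+\rho)$, the interchange of expectation and the finite product over coordinates, and the arithmetic with $R=n/T$ — are exactly as in Gallager's original proof and require no new ideas. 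One should also note that discreteness of $\cO$ (or a density version for continuous output, as needed later for AWGN in Lemma \ref{lem:awgn}) only affects whether $\sum_{j \in \cO}$ is a sum or an integral, and does not interact with the independence argument at all.
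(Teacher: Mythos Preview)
Your proposal is correct and follows essentially the same route as the paper's proof: Gallager's random coding bound, with the observation that the only place independence is invoked is when the conditional law of a single competitor $\bx(\bm')$ given $\bx(\bm)$ is replaced by its marginal, for which pairwise independence suffices. The only cosmetic difference is the order of operations: the paper bounds the error indicator $\phi_{\bm}(\by)$ deterministically and then applies Jensen's inequality (concavity of $x\mapsto x^{\rho}$) when averaging over the code, whereas you condition on $(\bx,\by)$ first, apply $\min(1,u)\le u^{\rho}$ to the union bound, and then use a Markov bound on each summand---both are standard presentations of the same argument and yield the identical expression.
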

The best bound is achieved by optimizing for choice of $\rho,
Q$. Specifically, define  
\begin{align}\label{apx:eq1}
E_o(R) & = \max_{\rho, Q} \Big[ -\rho R + E_o(\rho, Q)\Big].
\end{align}
Then, Theorem \ref{thm:gallager} implies the bound $P_{e\bm} \leq
2^{-N E_o(R)}$. This bound when specialized to the BSC and AWGN
channel (with proper choice of $\rho, Q$ in \eqref{apx:eq1}) results
in Lemmas \ref{lem:bsc} and \ref{lem:awgn} (see \cite[Chapters 5,
7]{Gallager68} for details).
\begin{proof}[Proof of Theorem \ref{thm:gallager}]
  The proof is essentially identical to that in \cite{Gsimple},
  presented here for completeness. Consider a message $\bm \in
  \{0,1\}^n$. Then,
\begin{align}\label{apx:eq2}
\P(\bm \neq \hat{\bm}) & = \sum_{\by \in \cO^T} \P(\by | \bx(\bm)) \phi_\bm(\by), 
\end{align}
where
\begin{align}
\phi_{\bm}(\by) & = \begin{cases} 1 & \text{~if~} \P(\by | \bx(\bm)) \leq \P(\by | \bx(\bm')) ~~\text{for some} ~~\bm' \neq \bm, \\
                                     0 & \text{~otherwise}.
                                     \end{cases}
\end{align}
This can be upper bounded as 
\begin{align}\label{apx:eq3}
\phi_{\bm}(\by) & \leq \left[ \frac{\sum_{\bm' \neq \bm} \P(\by | \bx(\bm'))^{\frac{1}{1+\rho}}}{\P(\by | \bx(\bm))^{\frac{1}{1+\rho}}}\right]^\rho, ~\rho > 0. 
\end{align}
From \eqref{apx:eq2} and \eqref{apx:eq3}, we obtain 
\begin{align}\label{apx:eq4}
\P(\bm \neq \hat{\bm})& \leq \sum_{\by \in \cO^T} \P(\by | \bx(\bm))^{\frac{1}{1+\rho}} \Big[\sum_{\bm' \neq \bm} \P(\by | \bx(\bm'))^{\frac{1}{1+\rho}}\Big]^\rho, ~\rho > 0. 
\end{align}
Now recalling that it is a pairwise independent random code and
averaging both sides with respect to this random code, we obtain for
$0 < \rho \leq 1$, 
\begin{align}\label{apx:eq5}
P_{e\bm} & \equiv \E\big[\P(\bm \neq \hat{\bm})\big] \nonumber \\
                 & \leq \E\Big[ \sum_{\by \in \cO^T} \P(\by | \bx(\bm))^{\frac{1}{1+\rho}} \Big[\sum_{\bm' \neq \bm} \P(\by | \bx(\bm'))^{\frac{1}{1+\rho}}\Big]^\rho \Big] \nonumber \\
                 & = \sum_{\by \in \cO^T} \E\Big[  \P(\by | \bx(\bm))^{\frac{1}{1+\rho}} \Big[\sum_{\bm' \neq \bm} \P(\by | \bx(\bm'))^{\frac{1}{1+\rho}}\Big]^\rho \Big] \nonumber \\
                 & = \sum_{\by \in \cO^T} \E_{\bx(\bm)}\Big[  \P(\by | \bx(\bm))^{\frac{1}{1+\rho}} \E \Big[\Big[\sum_{\bm' \neq \bm} \P(\by | \bx(\bm'))^{\frac{1}{1+\rho}}\Big]^\rho \big | \bx(\bm)\Big]\Big] \Big] \nonumber\\
                 & \stackrel{(a)}{\leq} \sum_{\by \in \cO^T} \E_{\bx(\bm)}\Big[  \P(\by | \bx(\bm))^{\frac{1}{1+\rho}} ~\Big(\E \Big[\Big[\sum_{\bm' \neq \bm} \P(\by | \bx(\bm'))^{\frac{1}{1+\rho}}\Big] \big | \bx(\bm)\Big]\Big)^\rho \Big] \Big]  \nonumber \\
               & = \sum_{\by \in \cO^T} \E_{\bx(\bm)}\Big[  \P(\by | \bx(\bm))^{\frac{1}{1+\rho}} ~\Big( \sum_{\bm' \neq \bm} \E\Big[\P(\by | \bx(\bm'))^{\frac{1}{1+\rho}} \big | \bx(\bm)\Big]\Big] \Big)^\rho \Big] \nonumber \\ 
               & \stackrel{(b)}{=} \sum_{\by \in \cO^T} \E_{\bx(\bm)}\Big[  \P(\by | \bx(\bm))^{\frac{1}{1+\rho}}\Big] ~\Big( \sum_{\bm' \neq \bm} \E\Big[\P(\by | \bx(\bm'))^{\frac{1}{1+\rho}}\Big]\Big] \Big)^\rho. 
\end{align}
Here, we use the notation $\E_{\bx(\bm)}$ to explicitly note that the
randomness is with respect to $\bx(\bm)$; (a) follows from Jensen's
inequality for conditional expectation and fact that $f(x) = x^\rho$
is a concave function for $0 < \rho \leq 1$; and (b) follows from the
pairwise independence of $\bx(\bm)$ and $\bx(\bm')$ for any pair of
messages $\bm \neq \bm'$. Now due to symmetry of the random coding
distribution, it follows that $\E\Big[\P(\by |
\bx(\bm'))^{\frac{1}{1+\rho}}\Big]$ is the same for all $\bm'$
(including $\bm$) and equals
\begin{align}
\E\Big[\P(\by | \bx(\bm'))^{\frac{1}{1+\rho}}\Big] & = \sum_{\bi \in \cI^T} Q^T(\bi) \P(\by | \bi)^{\frac{1}{1+\rho}}, 
\end{align} 
where $Q^T(\bi) = \prod_{t=1}^T Q(i_t)$. Therefore, from
\eqref{apx:eq5} and the fact that $n = RT$, we have  
\begin{align}\label{apx:eq7}
P_{e\bm} & \leq 2^{\rho RT}~\sum_{\by \in \cO^T} \Big[\sum_{\bi \in \cI^T} Q^T(\bi) \P(\by | \bi)^{\frac{1}{1+\rho}}\Big]^{1+\rho}. 
\end{align} 
Now using the property of memoryless channels and random codes, we
have that
\begin{align}
Q^T(\bi) \P(\by | \bi)^{\frac{1}{1+\rho}} & = \prod_{t=1}^T Q(i_t) \P(y_t | i_t)^{\frac{1}{1+\rho}}.
\end{align}
Using this product-from in \eqref{apx:eq7} and exchanging sums and
products, we have
\begin{align}\label{apx:eq8}
P_{e\bm} & \leq 2^{\rho RT} \prod_{t=1}^T \sum_{y_t \in \cO} \Big[ \sum_{i_t \in \cI} Q(i_t) \P(y_t | i_t)^{\frac{1}{1+\rho}}\Big]^{1+\rho} \nonumber \\
                 & \stackrel{(a)}{=} 2^{\rho RT} \Big[ \sum_{j \in \cO} \Big(\sum_{i \in \cI} Q(i) P_{ij}^{\frac{1}{1+\rho}}\Big)^{1+\rho}\Big] \nonumber \\
                 & \stackrel{(b)}{=} 2^{\rho R T} 2^{-T E_o(\rho, Q)} \nonumber \\
                 & = 2^{ -T (- \rho R + E_o(\rho, Q))}. 
\end{align}
Here, (a) uses the definitions of random code and memoryless channel,
and (b) follows from the definition of $E_o(\rho, Q)$. 
\end{proof}

\end{appendix}

\end{document}